\documentclass{article}
\usepackage[utf8]{inputenc}
\usepackage{authblk}
\usepackage{setspace}
\usepackage[margin=1.25in]{geometry}
\usepackage{graphicx}
\usepackage{subcaption}
\usepackage{amsmath}
\usepackage{amssymb}
\usepackage{amsthm}
\usepackage{bm}
\usepackage{hyperref}
\hypersetup{colorlinks=true,linkcolor=blue,citecolor=blue,urlcolor=blue}
\usepackage{xcolor}
\usepackage{booktabs}
\usepackage{float}
\usepackage{enumitem}
\setlist{itemsep=3pt,parsep=1pt}
\usepackage{microtype}
\usepackage{dblfloatfix}
\usepackage[section]{placeins}
\usepackage{tabularx}
\usepackage{array}
\usepackage{lineno}

\newcolumntype{Y}{>{\raggedright\arraybackslash}X}

\newtheorem{theorem}{Theorem}
\newtheorem{proposition}{Proposition}
\newtheorem{definition}{Definition}
\newtheorem{corollary}{Corollary}
\newtheorem{remark}{Remark}
\newtheorem{assumption}{Assumption}

\usepackage[style=phys,
citestyle=numeric-comp,
sorting=none,
doi=true,
url=false,
isbn=false,
natbib=true]{biblatex}
\title{Trajectory Stability and Signature Diagnostics for Comet-Based Interstellar Navigation}

\author[1]{Bo Pieter Johannes Andr\'ee\thanks{The findings, interpretations, and conclusions expressed herein are entirely those of the author and do not necessarily represent the views of the International Bank for Reconstruction and Development/World Bank, its Board of Executive Directors, or the governments they represent. No World Bank resources were used to conduct this research. This research received no specific funding. Domain-specialist input is gratefully welcomed. Contact: bandree(at)worldbank.org}}

\affil[1]{Data Group and Multilateral \& UN, World Bank, Geneva, Switzerland.}
\date{March 2026}

\begin{document}

\maketitle

\begin{abstract}
Interstellar objects (ISOs) motivate a coupled mission-design and inference question relevant to spacecraft dynamics and control in extreme environments: if volatile-rich, rotating comet-like bodies were used for sustained deep-space navigation by exploiting pre-existing hyperbolic motion and in-situ propellant, what stability requirements arise under non-gravitational forcing, and what astrometric signatures might distinguish active stabilization from uncontrolled natural dynamics? We develop a stability-theoretic framework for trajectory tracking with jet-actuated correction, and show that high-speed transit geometry---including debris-belt avoidance and encounter phasing---tightly constrains feasible trajectories, making long-horizon tracking stability mission-critical. We model tracking residuals as the balance of disturbances and corrective action, and derive stability conditions across four levels: disturbance-energy stability, outer-loop contraction, actuator-memory stability, and rotation-mediated (Floquet) stability. The analysis implies residual diagnostics that can motivate empirical tests: under comparable forcing, effective stabilization is expected to strengthen short-horizon error correction, reduce event-conditioned persistence and variance clustering, regularize standardized innovations, and yield bounded post-shock recovery. More broadly, the framework provides a reference for deep-space guidance and control under nonlinear, multi-field disturbances and for planetary-defense concepts involving attitude shaping or impulsive kinetic impact.
\end{abstract}

\medskip
\noindent
\textbf{Keywords:} interstellar objects, 3I/ATLAS, control theory, comet-based exploration, deep-space navigation, ARMA--GARCH models, error correction, non-gravitational forcing, trajectory tracking

\pagebreak


\section{Introduction}
\label{sec:intro}

Interstellar objects (ISOs) and highly eccentric comet-like bodies periodically traverse the inner Solar System, offering rare but information-rich opportunities for remote characterization and in-situ exploration. The confirmed detections of 1I/\hspace{0pt}'Oumuamua~\cite{meech2017oumuamua}, 2I/Borisov~\cite{jewitt2019comet}, and 3I/ATLAS~\cite{loeb2025intercepting} demonstrate that such bodies pass through the inner Solar System with sufficient regularity~\cite{delaFuenteMarcos2024ejected} to support systematic characterization~\cite{eldadi2026ioss,jewitt2022interstellar,delaFuenteMarcos2025gtc} and mission planning~\cite{hein2022interstellar,seligman2018oumuamua,hibberd2020project,landau2023iso}. These developments motivate renewed attention to the spacecraft dynamics and control challenges inherent in exploiting such bodies for exploration~\cite{zhang2024advances_sdc}.

Beyond their scientific interest as remnants of planetary formation, ISOs can be viewed through an engineering lens: their high-energy heliocentric trajectories---with hyperbolic excess velocities of 20--80~km/s~\cite{jewitt2022interstellar}---provide pre-existing large-baseline displacement that could, in principle, be leveraged for deep-space navigation if trajectory control is achievable. Practical steering on a comet-like body may be accomplished with hardware that enables a combination of instantaneous in-plane authority and attitude shaping, exploiting natural rotation to achieve six-degree-of-freedom control authority in a static force sense.

A central challenge is that controlled navigation on a small rotating body is an inherently \emph{stochastic} tracking problem. Even when the desired trajectory is smooth, realized motion is continuously perturbed by solar radiation pressure~\cite{wertz1999spacecraft}, plasma and particle environments~\cite{jokipii1998particle,richardson2018solar}, episodic space-weather events~\cite{tsurutani2003extreme}, micro-impacts, and (for volatile-rich bodies) variable outgassing~\cite{mumma2011composition}. It is not enough to produce occasional thrust; the control system must ensure \emph{long-horizon stability}: tracking errors must remain bounded and must ``forget'' early transients so that initial errors do not dominate future trajectory planning. This requirement---stable trajectory tracking under persistent stochastic perturbations in a multi-field coupling, nonlinear environment---places the problem squarely within the domain of spacecraft dynamics and control~\cite{wie2008spacecraft,markley2014fundamentals,zhang2024advances_sdc}.

The stability framework developed here is grounded in geometric trajectory constraints. For interstellar transit involving inner-system reconnaissance, the trajectory must satisfy debris-belt avoidance on both entry and exit, achieve favorable alignment with priority targets, and maintain propellant economy. These constraints imply an asymmetry in propulsion requirements: steering demands are modest (degree-scale slope adjustments executed over weeks to months), while stability demands are critical (maintaining the designed trajectory within tight tolerances over mission-critical transit points). Once a trajectory profile is selected, the dominant operational requirement is not ad hoc maneuvering authority but the ability to remain on course under persistent stochastic perturbations.

We develop a multi-level stability theory for jet-actuated spacecraft maintaining such trajectories. The theory applies to any configuration that enables practical three-dimensional steering, from minimal designs~\cite{andree2026thruster} to over-actuated systems with full six-degree-of-freedom authority. Closed-loop tracking is stable only if disturbance energy remains bounded, tracking dynamics are contractive, actuation dynamics do not self-excite, and rotation-mediated time-periodic authority remains Floquet-stable~\cite{bittanti2009periodic}. This multi-level structure resonates with broader challenges in spacecraft attitude and orbit control where coupled dynamics, extreme environments, and limited actuation interact~\cite{zhang2024advances_sdc,li2024trajectory_mars}.

The stability requirements also yield observable predictions. The ISO literature commonly frames anomaly detection by stacking individual deviations from a gravity-only trajectory~\cite{eldadi2026ioss}, but this comparison is a poor discriminator in both directions: natural bodies routinely depart from gravitational paths through asymmetric outgassing and radiation pressure~\cite{micheli2018oumuamua,bockelee2004borrelly,jockers2011encke}, while an actively steered craft could follow a near-gravitational trajectory precisely because doing so is fuel-efficient. Each individual deviation admits a passive explanation, so the stacking approach is vulnerable to model expansion on the passive side. We develop tests that instead target a structural property: passive dynamics, however complex, do not produce closed-loop contraction toward a reference trajectory. The diagnostics we derive are designed to detect violation of this structural absence rather than to accumulate quantitative anomalies against an expanding natural baseline.

The paper proceeds from trajectory geometry to a gravity-referenced stochastic tracking formulation, derives multi-level stability conditions, and develops residual-based diagnostics with a numerical illustration connecting stability requirements to observable signatures. We conclude with the observational and mission-design work ahead: discriminating evidence in future ISO encounters, and leveraging comet-based platforms for deep-space exploration.


\section{Trajectory geometry: debris avoidance and reconnaissance constraints}
\label{sec:trajectory_geometry}

Before addressing the stability problem, we establish the geometric constraints that govern trajectory design for high-speed interstellar transit with multi-target reconnaissance objectives. A reconnaissance pass through an inner planetary system requires safe entry through any outer debris structure, close-to-ecliptic passage near priority targets to minimize encounter range and maximize observation time, and safe exit.

Planetary formation generically leaves behind a belt of debris---unaccreted planetesimals, dust, and small bodies---concentrated near the ecliptic. Among FGK-type host stars, such debris structures extend to tens of AU while habitable zones fall at $\sim$0.5--2~AU~\cite{wyatt2008debris_disks,hughes2018debris}, so our solar system is representative of the target population. We therefore develop a typical mission geometry using it as the reference configuration.

At velocities approaching $80\,\mathrm{km/s}$, an inner planetary system comparable to our own is crossed in weeks, and large lateral $\Delta V$ for close planetary flybys may be prohibitive. In this regime, transit geometry materially constrains encounter distances and the entry and exit points through the debris belt, and trajectory slope selection becomes a first-order mission parameter. These constraints are characteristic of trajectory optimization in deep-space exploration more broadly~\cite{li2024trajectory_mars,colombo2009lowthrust_neo}.

We first formalize the belt-crossing slope constraint and then relate slope and node placement to encounter range through the vertical-offset term. A key implication is that belt-avoidance, node placement, encounter timing, and instrument requirements interact in ways that cannot be readily re-optimized during flight, reinforcing that long-horizon tracking stability is the dominant operational requirement once a profile is selected.

\subsection{Debris avoidance geometry}
\label{sec:debris_geometry}

Proposition~\ref{prop:debris_slope} gives the minimum geometric requirement: a trajectory that enters a debris belt on one side and exits on the other must exceed a slope threshold set by the belt aspect ratio $h/d$. This is a locally necessary condition; it does not guarantee avoidance everywhere along the chord.

\begin{proposition}[Minimum avoidance slope for opposite-side belt crossings]
\label{prop:debris_slope}
Model a debris belt as concentrated about a reference plane with effective full vertical thickness $h$ (dense region $|z|\le h/2$) at the belt crossing locations. Consider an inbound belt crossing and an outbound belt crossing separated by in-plane chord length $d$ along the trajectory segment connecting the two crossings (e.g., of order a belt diameter). A straight-line trajectory with constant slope $\theta$ relative to the reference plane can place the two crossings on opposite sides of the dense band (one at or above $+h/2$ and the other at or below $-h/2$, or vice versa) only if
\begin{equation}
\theta \;\ge\; \theta^* \;=\; \arctan\!\left(\frac{h}{d}\right).
\label{eq:min_slope}
\end{equation}
\end{proposition}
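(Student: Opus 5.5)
The plan is a direct coordinate argument in a vertical plane containing the straight-line segment between the two crossings. Set up coordinates with $s$ the in-plane (horizontal) distance along the chord and $z$ the height above the reference plane. A straight line of constant slope $\theta$ relative to the reference plane has height $z(s) = z_0 + s\tan\theta$. We interpret $\theta\in[0,\pi/2)$ as the magnitude of the inclination, so that the sign of the vertical motion does not matter.

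The inbound crossing sits at $s=0$ and the outbound crossing at $s=d$. The total vertical displacement between the two crossings is therefore
\[
|z(d)-z(0)| = d\tan\theta .
\]
Suppose the two crossings lie on opposite sides of the dense band, with one at $z\ge h/2$ and the other at $z\le -h/2$ (or the reverse). Then the vertical separation of the two points is at least $h/2-(-h/2)=h$. Combining this with the displacement formula gives $d\tan\theta \ge h$, so $\tan\theta\ge h/d$.

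Since $\arctan$ is strictly increasing on $\mathbb{R}$ and $\theta\in[0,\pi/2)$, applying it to both sides yields $\theta\ge\arctan(h/d)=\theta^*$, which is (\ref{eq:min_slope}).

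To confirm that the threshold is sharp, and hence the natural minimum, I would note that $\theta=\theta^*$ with $z_0=h/2$ descends exactly to $-h/2$ at $s=d$. This places both crossings on the boundary of the band and attains equality.

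I do not expect a genuine obstacle; the step needing most care is interpretation. The vertical thickness $h$ must be the same at both crossing locations, or else it should be replaced by $(h_1+h_2)/2$. The slope must be measured against in-plane distance, so that the "chord length $d$" is the horizontal projection rather than the 3D path length. Finally, the result is only a necessary condition, as the paper itself stresses: it says nothing about intermediate points of the chord, which pass through the band.
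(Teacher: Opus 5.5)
Your proof is correct and matches the paper's argument: the altitude change over the in-plane chord is $d\tan\theta$, opposite-side placement forces $|\Delta z|\ge h$, and monotonicity of $\arctan$ then gives $\theta\ge\theta^*$. Your sharpness check ($z_0=h/2$, $\theta=\theta^*$) makes explicit the paper's brief remark that equality is attained at the minimum.
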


\begin{proof}
Over the chord length $d$, a line at slope $\theta$ changes altitude by $\Delta z = d\tan\theta$. Opposite-side crossings require $|\Delta z|\ge h$, with equality at the minimum. Hence $d\tan\theta^* = h$, yielding \eqref{eq:min_slope}.
\end{proof}

\begin{remark}[Scope and margin]
Proposition~\ref{prop:debris_slope} is a crossing constraint. It is not a collision-risk model and does not claim that the trajectory avoids all belt material everywhere along the chord. Operational planning would typically introduce explicit margin (replace $h$ by $h+m$) to account for uncertainty in belt thickness and structure, and would complement the geometry with a debris-density and shielding assessment.
\end{remark}

For the Kuiper Belt, taking an effective thickness $h \approx 10\,\mathrm{AU}$ and chord length $d \approx 100\,\mathrm{AU}$~\cite{petit2011canada,bannister2018ossos},
\begin{equation}
\theta^*_{\mathrm{KB}} = \arctan(0.1) \approx 6^\circ,
\label{eq:kuiper_slope}
\end{equation}
with plausible variations giving $\theta^*\approx 5^\circ$--$7^\circ$.

\subsection{Encounter range and node placement}
\label{sec:node_placement}

Having established the debris-avoidance constraint, we turn to how the chosen slope affects encounter range with interior targets. Geometrically, range decomposes into an in-plane term and a vertical term. Moreover, entering above the belt and exiting below it forces the trajectory to cross the reference plane somewhere between the two belt crossings, creating a natural node along the transit line where the vertical offset vanishes. This makes node placement a practical design lever: one can locate the node near a priority target, after which the remaining question is when (and for which encounters) reducing slope yields a meaningful additional range reduction.

\begin{proposition}[Encounter range decomposition]
\label{prop:range}
Fix a target body and consider the instant the spacecraft reaches the target's heliocentric radius $r$. Let $s$ denote the in-plane separation between the target and the spacecraft's in-plane crossing point at that instant, and let $v$ denote the spacecraft's perpendicular distance to the reference plane. Then the instantaneous range satisfies
\begin{equation}
R \;=\; \sqrt{s^2 + v^2}.
\label{eq:R}
\end{equation}
\end{proposition}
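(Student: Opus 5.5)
The plan is to set up coordinates adapted to the reference plane and reduce the claim to Pythagoras in $\mathbb{R}^3$. Take a Cartesian frame whose $xy$-plane is the reference plane (the ecliptic, or more generally the belt plane of Proposition~\ref{prop:debris_slope}) and whose $z$-axis is its normal. I will treat the target as lying in the reference plane at the instant considered. This is implicit in the statement, since only the spacecraft is assigned a vertical offset $v$. Modest target inclinations can later be absorbed by letting $v$ denote the relative vertical offset.

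At the instant the spacecraft reaches the target's heliocentric radius $r$, write the target position as $\mathbf{p}_T=(x_T,y_T,0)$ and the spacecraft position as $\mathbf{p}_S=(x_S,y_S,z_S)$. The spacecraft's in-plane crossing point is then the orthogonal projection $\Pi\mathbf{p}_S=(x_S,y_S,0)$. By definition, $s=\|\Pi\mathbf{p}_S-\mathbf{p}_T\|$ and $v=|z_S|$.

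The key step is to decompose the separation vector as
\[
\mathbf{p}_S-\mathbf{p}_T=(\Pi\mathbf{p}_S-\mathbf{p}_T)+(\mathbf{p}_S-\Pi\mathbf{p}_S).
\]
The first summand lies in the reference plane and the second is parallel to its normal, so the two are orthogonal. Expanding $\|\mathbf{p}_S-\mathbf{p}_T\|^2$ therefore kills the cross term and gives $R^2=s^2+v^2$. Taking the nonnegative square root yields \eqref{eq:R}.

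The only real obstacle is interpretive rather than computational: making precise what ``in-plane crossing point'' means. If it were read as the node where the trajectory itself meets the plane, rather than the projection of the current position, the orthogonality argument would fail. I will therefore fix it explicitly as the orthogonal projection of the spacecraft's position at that instant. I will also note that $s$ may be written using the heliocentric radius $r$ and the in-plane angular separation via the law of cosines if desired. That expression is not needed for the identity itself, but it prepares the subsequent discussion of how slope and node placement control $v$.
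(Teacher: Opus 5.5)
Your proof is correct and is essentially the paper's own argument. The paper likewise splits the relative displacement at the encounter epoch into an in-plane component of magnitude $s$ and a normal component of magnitude $v$, then applies Pythagoras; your explicit conventions (target lying in the reference plane, $s$ measured from the orthogonal projection of the spacecraft's position) just state outright what the paper's one-line proof leaves implicit.
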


\begin{proof}
At the encounter epoch, decompose the relative displacement into orthogonal components: an in-plane component of magnitude $s$ and a perpendicular-to-plane component of magnitude $v$. The Pythagorean theorem gives \eqref{eq:R}.
\end{proof}

Equation~\eqref{eq:R} shows that slope can affect range only through the vertical term $v$; it cannot compensate for poor in-plane alignment (large $s$). To make this dependence explicit along the transit line, we parameterize $v$ via a node location.

\begin{proposition}[Node representation of vertical offset]
\label{prop:node}
Parameterize the straight-line transit segment between the two belt crossings by along-track distance $x\in[0,d]$ measured in the reference plane from the inbound crossing ($x=0$) to the outbound crossing ($x=d$). If the trajectory has constant slope $\theta$, then its vertical coordinate is affine:
\begin{equation}
z(x) \;=\; (x_0-x)\tan\theta,
\label{eq:zx}
\end{equation}
for a unique $x_0\in\mathbb{R}$ at which the trajectory intersects the reference plane ($z(x_0)=0$). The vertical offset is $v(x)=|z(x)|$.
\end{proposition}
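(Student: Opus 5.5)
The plan is to treat the transit segment as a graph over the along-track coordinate and use the constant-slope hypothesis to write it as an affine function. Along-track distance $x$ is measured in the reference plane, and the slope is taken relative to that plane. Constant slope $\theta$ therefore means the vertical rate $dz/dx$ is constant along the segment. To match the sign convention of \eqref{eq:zx}, where the trajectory enters above and descends, I will take $dz/dx=-\tan\theta$. Integrating from the inbound crossing gives $z(x)=z(0)-x\tan\theta$ for all $x\in[0,d]$, which is the same elementary relation $\Delta z = d\tan\theta$ used in the proof of Proposition~\ref{prop:debris_slope}.

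Next I will extend this affine formula to all $x\in\mathbb{R}$, so that the node may lie outside $[0,d]$ in degenerate configurations. Provided $\theta\neq 0$, the linear equation $z(x_0)=0$ has the unique solution $x_0=z(0)/\tan\theta$. Substituting $z(0)=x_0\tan\theta$ back gives $z(x)=(x_0-x)\tan\theta$, which is \eqref{eq:zx}.

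Uniqueness comes from the strict monotonicity of a nonconstant affine function. If the descending convention is reversed, the same formula holds with $\theta$ replaced by $-\theta$, since the sign is absorbed into the orientation of the slope. Finally, $v(x)=|z(x)|$ is simply the definition of perpendicular distance to the plane $z=0$, as used in Proposition~\ref{prop:range}.

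The only real obstacle is the degenerate case $\theta=0$. There the line is parallel to the plane, so $x_0$ either does not exist or is not unique. I will exclude this case explicitly, noting that the opposite-side crossing setting of Proposition~\ref{prop:debris_slope} forces $\theta\ge\theta^*>0$ whenever $h>0$. In that setting the crossings have $z(0)\ge h/2$ and $z(d)\le -h/2$, so the intermediate value property of the affine $z$ places $x_0$ strictly inside $(0,d)$. This recovers the natural node described in the text.
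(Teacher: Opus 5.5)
Your proposal is correct and follows essentially the paper's route: write the constant-slope line as the affine function $z(x)=z(0)-x\tan\theta$, solve $z(x_0)=0$ to get $x_0=z(0)/\tan\theta$, and substitute back. You are slightly more careful than the paper, which divides by $b=\tan\theta$ without comment. You explicitly exclude $\theta=0$ and note that the opposite-side crossing setting forces $\theta\ge\theta^*>0$ and places the node in $(0,d)$.
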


\begin{proof}
A straight line in the $(x,z)$ plane has the form $z(x)=a-bx$ with $b=\tan\theta$. The point $x_0=a/b$ satisfies $z(x_0)=0$, giving \eqref{eq:zx}.
\end{proof}

\begin{remark}[Consequence of entering above and leaving below]
If the inbound crossing is above the plane and the outbound crossing is below it (or vice versa), then $z(0)$ and $z(d)$ have opposite signs. By continuity, there exists at least one $x_0\in(0,d)$ with $z(x_0)=0$. Thus, for any opposite-side belt-crossing design, there is a node along the interior transit line at which vertical offset vanishes.
\end{remark}

Corollary~\ref{cor:meaningful} formalizes when reducing $\theta$ yields a meaningful reduction in $R$ for a given encounter.

\begin{corollary}[When slope reduction yields meaningful range reduction]
\label{cor:meaningful}
Consider a given target encounter occurring at some $x=x_\star$. With $R$ as in \eqref{eq:R} and $v(x_\star)=|x_0-x_\star|\tan\theta$ from \eqref{eq:zx}, the marginal value of reducing $\theta$ is large only when
\[
s \;\lesssim\; v(x_\star)
\qquad\text{and}\qquad
|x_0-x_\star| \text{ is not small.}
\]
If $s\gg v(x_\star)$ (poor in-plane alignment) then $R$ is dominated by $s$ and slope reduction has weak effect. If $x_\star\approx x_0$ (the encounter is near the node) then $v(x_\star)\approx 0$ and slope is not a binding range constraint for that encounter.
\end{corollary}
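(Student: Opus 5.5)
The approach is to make ``marginal value of reducing $\theta$'' precise by differentiating $R$ with respect to $\theta$, holding $s$ and $x_\star$ fixed. By Proposition~\ref{prop:range} and Proposition~\ref{prop:node}, $R(\theta)=\sqrt{s^2+v^2}$ with $v=L\tan\theta$, where $L:=|x_0-x_\star|$. The chain rule then gives
\[
\frac{\partial R}{\partial \theta} \;=\; \frac{v}{\sqrt{s^2+v^2}}\cdot L\sec^2\theta \;=\; \frac{v}{R}\,L\sec^2\theta .
\]
This is a product of two factors. The first is the geometric weight $v/R\in[0,1]$. The second is the lever arm $L\sec^2\theta$. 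The marginal range gain per unit slope reduction is large only if both factors are non-negligible, and the two conditions of the corollary will be read off from these two factors.

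For the first factor, $v/R = 1/\sqrt{1+(s/v)^2}$. It is of order one precisely when $s\lesssim v(x_\star)$. When $s\gg v$ it is $\approx v/s\ll1$. I will also record the finite-change version: reducing $\theta$ to $\theta'$ changes $R$ by at most $R(\theta)-s=\sqrt{s^2+v^2}-s\le v^2/(2s)$. This bound is small relative to $R\approx s$ whenever $s\gg v$, so in that regime $R$ is dominated by $s$ and slope reduction has weak effect. That establishes the first stated case.

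For the second factor, if $x_\star\approx x_0$ then $L\approx0$. Hence both $v(x_\star)=L\tan\theta\approx0$ and $\partial R/\partial\theta\approx0$, and $R\approx s$ regardless of $\theta$ over the admissible range $\theta\ge\theta^*$ from Proposition~\ref{prop:debris_slope}. So slope is not binding for that encounter, which gives the second case. Conversely, when $s\lesssim v$ and $L$ is not small, both factors are order one and $\partial R/\partial\theta$ is of order $L\sec^2\theta$. This is the ``only when'' direction read contrapositively.

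The main obstacle is that the corollary is stated with informal relations ($\lesssim$, ``not small''), so the proof has to fix an interpretation. I would state it as a quantitative bound on the normalized sensitivity, $\theta\,\partial\ln R/\partial\theta = (v^2/R^2)\cdot\theta\sec^2\theta/\tan\theta$. This quantity is bounded by $v^2/(s^2+v^2)$ times a factor that is $O(1)$ for small $\theta$. It is therefore small whenever $s\gg v$ or $v\to0$, and the latter is the case $L\to0$. A second subtlety is that $x_0$ might itself shift as $\theta$ changes. I will treat the node as a design parameter held fixed, consistent with the node-placement discussion preceding the corollary, and say so explicitly.
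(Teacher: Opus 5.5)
Your argument is correct and is essentially the paper's proof. The paper observes $\partial R/\partial v = v/R$ and reads off suppression when $s\gg v$ and vanishing at the node. Your chain-rule factorization $\partial R/\partial\theta=(v/R)\,L\sec^2\theta$ is a mild sharpening, because the lever arm $L=|x_0-x_\star|$ makes the sensitivity vanish at the node even when $s=0$, where the paper's factor $v/R$ stays equal to one as $v\to 0$.

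One caution: keep the absolute derivative as your formal criterion. The normalized sensitivity $\theta\,\partial\ln R/\partial\theta=(v^2/R^2)\,2\theta/\sin 2\theta$ is of order one for every $L>0$ whenever $s\lesssim v$ (e.g.\ $s=0$), so it does not encode the ``$|x_0-x_\star|$ not small'' condition. Your claim that it becomes small as $v\to 0$ holds only with $s>0$ held fixed.
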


\begin{proof}
From \eqref{eq:R}, $\partial R/\partial v = v/R$. Thus sensitivity to slope is suppressed when $v\ll R$ (equivalently $s\gg v$), and it vanishes when $v=0$, i.e., at the node.
\end{proof}

\begin{remark}[Node placement and the reconnaissance template]
\label{rem:recon_template}
Because a node necessarily exists for opposite-side belt crossings, mission design is primarily about where to place it. For a reconnaissance mission targeting inner-system planets, the habitable zone ($\sim$1--1.5~AU) is the natural anchor: it concentrates the highest-priority targets, and placing the node there eliminates $v$ for the most valuable encounters. With the node so fixed and the slope set to $5^\circ$--$7^\circ$ by debris avoidance, the geometric profile is largely determined. The principal remaining freedom is temporal: encounter range still depends on the in-plane separation $s$ (Proposition~\ref{prop:range}), which is governed by the orbital phases of the target planets at the node-crossing epoch. In practice, mission design reduces to selecting a launch window that places the node crossing when the nearest habitable-zone planets are favorably aligned---minimizing the aggregate in-plane distance to the two or three closest targets. Slope reduction below the debris-avoidance minimum can further improve encounters far from the node (Corollary~\ref{cor:meaningful}), but this is a secondary refinement once the node and timing are set.
\end{remark}

\subsection{Illustrative geometry}
\label{sec:figure_geometry}

The preceding constraints converge on a narrow family of trajectory profiles: debris avoidance fixes the slope range, the habitable zone anchors the node, and launch-epoch selection targets favorable planetary phasing at the node-crossing time. Within this family, the slope varies as a \emph{profile} rather than remaining fixed. Consider the following instance:
\begin{enumerate}
\item[\textup{(i)}] \textbf{Entry:} $\theta = 7^\circ$, providing a $1^\circ$ margin above a $\sim 6^\circ$ Kuiper-belt scale avoidance slope during inbound debris transit.
\item[\textup{(ii)}] \textbf{Inner system:} after clearing the debris belt ($r \lesssim 30\,\mathrm{AU}$), reduce slope to $\theta = 5^\circ$ if this improves priority-target range and/or enlarges the set of epochs yielding acceptable alignment, or favors alignment-sensitive measurements.
\item[\textup{(iii)}] \textbf{Exit:} restore $\theta = 7^\circ$ before outbound debris transit.
\end{enumerate}
This requires a total steering budget of $4^\circ$ ($2^\circ$ down, $2^\circ$ up), executed over the months-long inner-system passage. The specific slope values and node location are mission-dependent, but the template itself---degree-scale slopes, habitable-zone node, phasing-driven timing---is generic. Once a profile is selected, the maneuver magnitudes are known and modest; the governing challenge is maintaining the profile under persistent stochastic perturbations.

Figure~\ref{fig:kuiper_geometry} visualizes the associated geometry tradeoffs. Panel~(a) shows an inner-system side view ($x$--$z$) comparing two slope choices ($7^\circ$ and $5^\circ$) for a habitable-zone node near $\sim 1\,\mathrm{AU}$ (Remark~\ref{rem:recon_template}), together with a Sun-centered reference case. Panel~(b) plots the corresponding vertical-offset magnitude $|z(r)|$ versus heliocentric distance, highlighting the separation in $|z|$ at Kuiper-belt scale and an illustrative outer-system corridor ($10$--$45\,\mathrm{AU}$) over which degree-scale trajectory shaping may be executed.

\begin{figure}[h!]
    \centering
    \includegraphics[width=\textwidth]{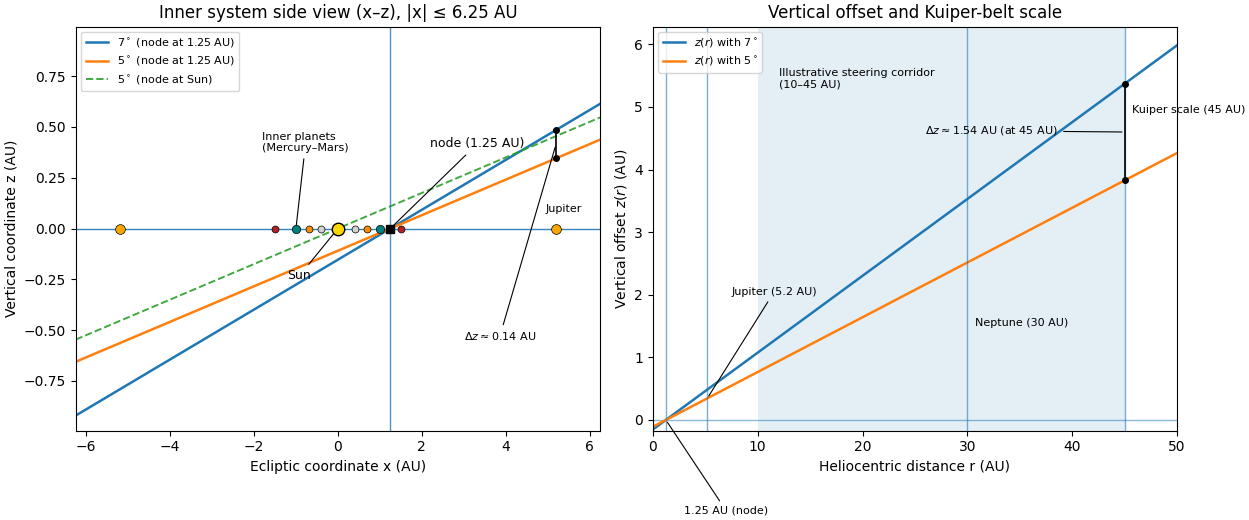}
    \caption{\textbf{Schematic geometry linking debris-belt avoidance, node placement, and inner-system vertical offset.}
    \textbf{(a)} Inner-system side view ($x$--$z$, $|x|\le 6\,\mathrm{AU}$). Solid lines compare two slope choices ($7^\circ$ and $5^\circ$) with the node placed near $1.25\,\mathrm{AU}$---a habitable-zone anchor that jointly minimizes encounter range to Earth and Mars when their orbital phases are favorable (Remark~\ref{rem:recon_template}); the dashed line shows a Sun-centered node for reference. The bracket at Jupiter indicates the reduction in vertical offset from lowering the inner-system slope.
    \textbf{(b)} Vertical offset magnitude $|z(r)|$ versus heliocentric distance for the same two slopes (node at $\sim 1.25\,\mathrm{AU}$). The shaded band marks an illustrative outer-system corridor ($10$--$45\,\mathrm{AU}$) for degree-scale trajectory shaping, and the bracket at $45\,\mathrm{AU}$ highlights the resulting separation in $|z|$ at Kuiper-belt scale.}
    \label{fig:kuiper_geometry}
\end{figure}

\section{Trajectory control and the stability problem}
\label{sec:problem}

\subsection{Stability over steering}
\label{sec:stability_over_steering}

The trajectory template established in Section~\ref{sec:trajectory_geometry} reveals an asymmetry in propulsion requirements that motivates the remainder of the paper. The steering budget is modest: the $4^\circ$ profile adjustment requires only degree-scale trajectory modification, executed over weeks to months, with no need for sharp turns or high angular rates. The stability requirement is far more demanding: the belt-crossing margin, node placement, and encounter phasing must all be preserved under persistent, stochastic perturbations---outgassing, radiation pressure, micrometeorite impacts, and thruster imbalances---that accumulate throughout the transit. A degree-scale drift from the planned belt-safe slope would erode the avoidance margin; a timing error at the node-crossing epoch would degrade the planetary alignment that the launch window was chosen to exploit. The governing problem is therefore not episodic steering authority but long-horizon tracking stability: errors must remain bounded and must decay so that early transients do not dominate later mission geometry. The remainder of the paper treats the planned profile as given and develops stability conditions and observable residual implications for maintaining it.

\subsection{Observed trajectories}

We consider a body whose heliocentric motion is observed through astrometry and whose short-term state can be estimated via nonlinear filtering. Let $\mathbf{x}_t\in\mathbb{R}^3$ and $\dot{\mathbf{x}}_t\in\mathbb{R}^3$ denote inertial position and velocity at discrete times $t=1,2,\dots$ with sampling interval $\Delta t$. Let $\mathbf{x}^{\mathrm{grav}}_t$ denote a gravity-only reference trajectory obtained by propagating an estimated initial state under a chosen gravitational dynamics model. This gravity reference provides a common baseline for both passive and controlled bodies, and it emphasizes a central distinction: gravitational forces are predictable over long horizons, whereas many non-gravitational effects are not.

We represent the realized trajectory using a decomposition that separates (i) gravity-dominated motion, (ii) long-horizon guidance, and (iii) residual deviations driven by stochastic forcing and correction:
\begin{equation}
\mathbf{x}_t
=
\mathbf{x}^{\mathrm{grav}}_t
+
\mathbf{x}^{\mathrm{plan}}_t
+
\mathbf{s}_t
-
\mathbf{o}_t .
\label{eq:master_decomp}
\end{equation}
Here $\mathbf{x}^{\mathrm{plan}}_t$ is a mission-level guidance component expressed \emph{relative} to the gravity-only baseline---incorporating the debris-avoidance and reconnaissance geometry developed in Section~\ref{sec:trajectory_geometry}---$\mathbf{s}_t$ aggregates non-gravitational forcing that is uncertain or only partially modeled, and $\mathbf{o}_t$ is the offset produced by short-horizon corrective authority. This separation captures the practical logic of navigation on ISO-class trajectories: long-horizon steering can be planned around predictable forces (gravity and any stable mean non-gravitational terms), while short-horizon offsetting must continuously counter unpredictable (or unavoidable) fluctuations and shocks.

Equation~\eqref{eq:master_decomp} is written to accommodate both passive comets and actively controlled bodies as special cases. For a natural comet, there is no mission guidance and no corrective offset, so
\begin{equation}
\mathbf{x}^{\mathrm{plan}}_t \equiv \mathbf{0},\qquad \mathbf{o}_t \equiv \mathbf{0}
\quad\Longrightarrow\quad
\mathbf{x}_t = \mathbf{x}^{\mathrm{grav}}_t + \mathbf{s}_t.
\label{eq:natural_comet}
\end{equation}
For an actively steered craft, $\mathbf{x}^{\mathrm{plan}}_t$ may be nonzero and $\mathbf{o}_t$ acts to compensate stochastic forcing:
\begin{equation}
\begin{aligned}
\mathbf{x}^{\mathrm{plan}}_t &\not\equiv \mathbf{0},\qquad \mathbf{o}_t \not\equiv \mathbf{0}
\quad\Longrightarrow\quad
\mathbf{x}_t
= \mathbf{x}^{\mathrm{grav}}_t + \mathbf{x}^{\mathrm{plan}}_t \\
&\hspace{3.6em} + \mathbf{s}_t - \mathbf{o}_t.
\end{aligned}
\label{eq:controlled_craft}
\end{equation}

This unified formulation clarifies what different empirical tests can and cannot identify: some procedures probe whether $\mathbf{x}_t$ is adequately explained by gravity plus plausible non-gravitational forcing, while others seek evidence that $\mathbf{o}_t$ is nonzero through signatures of active stabilization.

\subsection{Tracking error}

To connect \eqref{eq:master_decomp} to stability theory, define the gravity-referenced planned trajectory
\begin{equation}
\mathbf{x}^{\mathrm{ref}}_t \;\equiv\; \mathbf{x}^{\mathrm{grav}}_t + \mathbf{x}^{\mathrm{plan}}_t,
\label{eq:x_ref_def}
\end{equation}
and the corresponding tracking error
\begin{equation}
\mathbf{e}_t \;\equiv\; \mathbf{x}_t - \mathbf{x}^{\mathrm{ref}}_t.
\label{eq:e_def}
\end{equation}
Combining \eqref{eq:master_decomp}--\eqref{eq:e_def} yields the central residual identity
\begin{equation}
\mathbf{e}_t \;=\; \mathbf{s}_t \;-\; \mathbf{o}_t.
\label{eq:error_shock_offset}
\end{equation}

Let $\mathcal{F}_t$ denote the filtration generated by the observation process and all realized control actions
up to time $t$ (including any internal controller state). Stability of trajectory tracking is governed by the
\emph{unrejected} component of the residual dynamics driving $\{\mathbf{e}_t\}$, i.e.\ the innovation entering the
one-step error update after conditioning on $\mathcal{F}_{t-1}$.
When we represent this unrejected forcing in standardized form, we impose the following regularity condition.

\begin{assumption}[Innovation regularity]
\label{ass:innov_reg}
The standardized innovations satisfy $\mathbb{E}[\boldsymbol{\xi}_t|\mathcal{F}_{t-1}]=\mathbf{0}$,
$\mathbb{E}[\boldsymbol{\xi}_t\boldsymbol{\xi}_t']=I_3$, and $\mathbb{E}\|\boldsymbol{\xi}_t\|^{2+\delta}<\infty$ for some $\delta>0$.
\end{assumption}

An important subtlety of this assumption deserves emphasis.

\begin{remark}[Regularity is imposed on the tracking residual, not on the environment]
\label{rem:regularity_on_e}
Equation \eqref{eq:error_shock_offset} implies that the object entering both stability and inference is the
\emph{difference process} $\mathbf{s}_t-\mathbf{o}_t$. We therefore do \emph{not} require that the disturbance
environment $\{\mathbf{s}_t\}$ be light-tailed, weakly dependent, or close to Gaussian.
Stable tracking instead requires that the guidance--navigation--control stack generate offsets $\{\mathbf{o}_t\}$
such that the \emph{net residual} driving the tracking error is regular (finite moments and stable conditional
energy), even when $\{\mathbf{s}_t\}$ itself is irregular. In the passive regime $\mathbf{o}_t\equiv \mathbf{0}$,
this reduces to a physical regularity requirement on $\{\mathbf{s}_t\}$.
\end{remark}

In the linearized recursion introduced in Section~\ref{sec:tracking}, define the one-step-ahead residual forcing as
\begin{equation}
\boldsymbol{\varepsilon}_{t+1}
\;\equiv\;
\mathbf{e}_{t+1}-A_t\mathbf{e}_t-B_t\mathbf{c}_t,
\label{eq:eps_def_from_e}
\end{equation}
where $(A_t,B_t,\mathbf{c}_t)$ are the linearized error and control-channel objects defined in
\eqref{eq:closed_loop_generic}. This $\boldsymbol{\varepsilon}_{t+1}$ is the object modeled stochastically below:
it is the \emph{unrejected} innovation in tracking error after applying the realized corrective authority.

Although $\mathbf{e}_t\in\mathbb{R}^3$ is defined in an inertial frame, it is often more interpretable in a trajectory-aligned basis. Let $\hat{\mathbf{v}}_t$ be the estimated inertial velocity. Define the local orthonormal triad
\begin{equation}
\label{eq:tnr_triad}
\mathbf{t}_t=\frac{\hat{\mathbf{v}}_t}{\|\hat{\mathbf{v}}_t\|},\qquad
\mathbf{n}_t=\frac{\mathbf{x}_t\times\hat{\mathbf{v}}_t}{\|\mathbf{x}_t\times\hat{\mathbf{v}}_t\|},\qquad
\mathbf{r}_t^{(R)}=\mathbf{n}_t\times\mathbf{t}_t,
\end{equation}
corresponding to along-track ($\mathbf{t}_t$), cross-track ($\mathbf{n}_t$), and in-plane radial ($\mathbf{r}_t^{(R)}$) directions. Projecting the tracking error onto this basis gives
\begin{equation}
\label{eq:error_components}
e_t^{(T)}=\mathbf{t}_t^\top\mathbf{e}_t,\qquad
e_t^{(R)}=\left(\mathbf{r}_t^{(R)}\right)^\top\mathbf{e}_t,\qquad
e_t^{(N)}=\mathbf{n}_t^\top\mathbf{e}_t,
\end{equation}
which represent forward/backward deviation, in-plane radial deviation, and out-of-plane deviation, respectively. This decomposition is useful because disturbances tend to load these axes differently, while active stabilization predicts error-correcting dynamics in each component.

For long-horizon navigation---where trajectories may extend over years to decades at interstellar distances~\cite{jewitt2022interstellar}---$\mathbf{x}^{\mathrm{plan}}_t$ is expected to be smooth (low curvature and sparse major retargeting), while $\mathbf{e}_t$ contains the higher-frequency imprint of disturbances and corrective action.

The control objective is \emph{stable tracking} of the reference $\mathbf{x}^{\mathrm{ref}}_t$ in the sense that the tracking error remains bounded in mean square and becomes insensitive to initialization:
\begin{equation}
\sup_t \mathbb{E}\|\mathbf{e}_t\|^2 < \infty,
\qquad
\lim_{k\to\infty}\mathbb{E}\|\mathbf{e}_{t+k}-\tilde{\mathbf{e}}_{t+k}\|^2 = 0,
\label{eq:stability_objective}
\end{equation}
where $\tilde{\mathbf{e}}_t$ denotes the error process under a different initial state estimate. These conditions formalize the requirement that stochastic forcing does not accumulate into long-run divergence from the intended path---the path that was designed in Section~\ref{sec:trajectory_geometry} to satisfy debris-avoidance and reconnaissance constraints.

The gravity term is not treated as a disturbance to be rejected; it is incorporated into the reference definition \eqref{eq:x_ref_def}. In this sense, gravity is ``canceled'' in the tracking error \eqref{eq:e_def} by construction, leaving the stability theory to focus on bounded residual dynamics via \eqref{eq:error_shock_offset}. A similar distinction applies to outgassing: any stable, slowly varying mean component can be incorporated into $\mathbf{x}^{\mathrm{plan}}_t$ at the guidance level, whereas the unpredictable component of outgassing variability enters $\mathbf{s}_t$ and must be countered by short-horizon correction. We make this separation explicit with the following assumption.

\begin{assumption}[Gravity-referenced control authority]
\label{ass:grav_ref_authority}
There exists a feasible long-horizon guidance signal $\{\mathbf{x}^{\mathrm{plan}}_t\}$---satisfying the debris-avoidance and reconnaissance constraints of Section~\ref{sec:trajectory_geometry}---and corresponding reference $\mathbf{x}^{\mathrm{ref}}_t=\mathbf{x}^{\mathrm{grav}}_t+\mathbf{x}^{\mathrm{plan}}_t$, such that the remaining short-horizon corrective authority can generate an offset process $\{\mathbf{o}_t\}$ that yields stable tracking in the sense of \eqref{eq:stability_objective}. Any temporary degradation of corrective authority (e.g.\ saturation, phase constraints, or actuator irregularities) enters the effective residual dynamics through $\mathbf{o}_t$ and can be treated as part of the realized disturbance-versus-offset balance in \eqref{eq:error_shock_offset}.
\end{assumption}

Assumption~\ref{ass:grav_ref_authority} formalizes the intended role of control authority in this paper. Long-horizon steering selects $\mathbf{x}^{\mathrm{plan}}_t$ and thus $\mathbf{x}^{\mathrm{ref}}_t$, exploiting predictable forces and any stable mean non-gravitational components to reduce propellant requirements, while short-horizon correction offsets unpredictable forcing so that the residual process $\mathbf{e}_t=\mathbf{s}_t-\mathbf{o}_t$ remains stable. This framing anticipates two complementary empirical strategies developed later: tests based on deviations from the gravity reference $\mathbf{x}^{\mathrm{grav}}_t$, and tests that seek evidence of $\mathbf{o}_t\not\equiv 0$ through error-correcting and disturbance-attenuating residual dynamics.

\subsection{Design trade-offs: shocks versus control authority}
\label{sec:tradeoffs}

Equation \eqref{eq:error_shock_offset} implies that engineering requirements are determined by the
\emph{unrejected} disturbance in tracking error, not by the raw disturbance field alone.
Section~\ref{sec:tracking} formalizes persistence and clustered energy by parameterizing the innovation channel
that drives $\{\mathbf{e}_t\}$; Assumption~\ref{ass:innov_reg} provides the baseline regularity conditions on
the standardized innovations underlying that representation.

Persistent residual disturbances tighten the required contraction of tracking dynamics. Let
\begin{equation}
\bar{\theta}_t \equiv \sum_{j=1}^q \|\Theta_{j,t}\|
\end{equation}
measure total moving-average (MA) persistence in the innovation channel, where $\Theta_{j,t}$ are the MA coefficient matrices and $q$ is the MA order (both formalized in \eqref{eq:dist_ma} below).
When residual forcing is more persistent, the control loop must attenuate errors more aggressively (through gain,
update rate, and/or authority margins) to prevent accumulation.

\begin{proposition}[AR--MA trade-off under persistent residual disturbances]
\label{prop:ar_ma_tradeoff}
Let $\mathcal{M}$ denote the companion matrix governing the closed-loop tracking-error state recursion (formalized in \eqref{eq:tv_state} below), and let $\|\cdot\|$ be any submultiplicative matrix norm. A sufficient condition for bounded mean-square tracking error under MA-persistent innovations is
\begin{equation}
\|\mathcal{M}\| < \frac{1}{1+\bar{\theta}_t},
\label{eq:ar_ma_bound}
\end{equation}
so that high residual disturbance persistence requires stronger contraction (higher feedback gain and/or higher correction frequency). This follows from the standard Lyapunov recursion for the mean-square state: submultiplicativity gives $\|\mathcal{M}^k\|\le\|\mathcal{M}\|^k$, while the MA expansion of the innovation amplifies shocks by at most $(1+\bar{\theta}_t)$; requiring the product to contract yields \eqref{eq:ar_ma_bound}~\cite{hamilton1994,lutkepohl2005}.
\end{proposition}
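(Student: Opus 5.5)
I will write the closed-loop error recursion in stacked companion form, $\mathbf{z}_{t+1}=\mathcal{M}\mathbf{z}_t+\mathcal{B}\,\mathbf{u}_{t+1}$. Here $\mathbf{z}_t$ collects the current and lagged tracking errors and any controller state, and the forcing $\mathbf{u}_{t+1}$ is the MA innovation $\boldsymbol{\varepsilon}_{t+1}=\boldsymbol{\eta}_{t+1}+\sum_{j=1}^q\Theta_{j,t}\boldsymbol{\eta}_{t+1-j}$. The driving noise $\boldsymbol{\eta}_t$ is the scaled standardized innovation $\Sigma_t^{1/2}\boldsymbol{\xi}_t$. Assumption~\ref{ass:innov_reg} provides martingale-difference structure and finite second moments, and I will also impose a uniform bound $\sup_t\mathbb{E}\|\boldsymbol{\eta}_t\|^2\le\bar\sigma^2$. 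In the time-varying case, $\mathcal{M}$ is read as $\mathcal{M}_t$ with $\|\mathcal{M}_t\|\le\rho$ uniformly, so the bound \eqref{eq:ar_ma_bound} is taken uniformly in $t$. Iterating the recursion gives the causal representation
\[
\mathbf{z}_{t}=\Big(\prod_{s<t}\mathcal{M}_s\Big)\mathbf{z}_0+\sum_{k\ge0}\Big(\prod_{s=t-k}^{t-1}\mathcal{M}_s\Big)\mathcal{B}\,\boldsymbol{\varepsilon}_{t-k}.
\]
Submultiplicativity bounds each transition product by $\rho^k$.

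Next I bound the forcing. By Minkowski's inequality in $L^2$, $\|\boldsymbol{\varepsilon}_{t}\|_{L^2}\le(1+\bar\theta_t)\,\bar\sigma$ for a matrix norm dominating the operator 2-norm; for other norms this holds up to a norm-equivalence constant, which I absorb into $\bar\sigma$. Applying Minkowski again to the whole series gives
\[
\|\mathbf{z}_t\|_{L^2}\le\rho^t\|\mathbf{z}_0\|_{L^2}+\|\mathcal{B}\|\,\bar\sigma\,(1+\bar\theta)\sum_{k\ge0}\rho^k.
\]
This is finite as soon as $\rho<1$, so it already yields $\sup_t\mathbb{E}\|\mathbf{e}_t\|^2<\infty$. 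For forgetting of initial conditions I subtract two solutions driven by the same innovations. The noise cancels, leaving $\|\mathbf{z}_{t+k}-\tilde{\mathbf{z}}_{t+k}\|\le\rho^k\|\mathbf{z}_t-\tilde{\mathbf{z}}_t\|\to0$ in mean square, which is the second condition in \eqref{eq:stability_objective}.

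The main obstacle is that this argument gives a weaker hypothesis, $\rho<1$, than the stated condition $\rho(1+\bar\theta)<1$. To recover the stated bound, I will take the route the statement suggests and absorb the MA lags into the propagation. Each innovation is allowed to re-enter the state through the MA channel, so the one-step map on the energy functional $V_t=\mathbb{E}\|\mathbf{z}_t\|^2$ (or its square root) is bounded by the product of the contraction $\rho$ and the amplification $(1+\bar\theta_t)$. This gives $\sqrt{V_{t+1}}\le\rho(1+\bar\theta_t)\sqrt{V_t}+c$. Requiring the per-step factor to be below one then gives \eqref{eq:ar_ma_bound} as a sufficient, deliberately conservative Lyapunov-type condition in the sense of \cite{hamilton1994,lutkepohl2005}. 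I would state explicitly that it implies, but is stronger than, the sharp condition $\rho<1$. This also matches the qualitative conclusion: larger $\bar\theta_t$ demands smaller $\|\mathcal{M}\|$.
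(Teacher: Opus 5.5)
Your proof is correct, but the step that actually proves the proposition is not the paper's argument. The paper's justification is the one you reproduce in your last paragraph: $\|\mathcal{M}^k\|\le\|\mathcal{M}\|^k$, the MA channel inflates shocks by at most $(1+\bar\theta_t)$, and the \emph{product} $\|\mathcal{M}\|(1+\bar\theta_t)$ is then required to be below one. This treats persistence as if it multiplied the per-step contraction.

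Your second paragraph combines the same two ingredients correctly. In the causal representation $(1+\bar\theta)$, with $\bar\theta=\sup_t\bar\theta_t$, multiplies only the forcing. Minkowski then gives $\sup_t\|\mathbf{z}_t\|_{L^2}\le\|\mathbf{z}_0\|_{L^2}+\|\mathcal{B}\|\bar\sigma(1+\bar\theta)/(1-\rho)$ under $\rho<1$ alone, and your coupling argument gives forgetting. That already finishes the proof. Since $\bar\theta_t\ge0$, \eqref{eq:ar_ma_bound} implies $\|\mathcal{M}\|<1$. For $\mathcal{M}\neq0$ it also gives $\sup_t\bar\theta_t\le 1/\|\mathcal{M}\|-1$, which is the uniform persistence bound your series estimate tacitly uses.

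So your ``main obstacle'' is not one: a stronger hypothesis is automatically sufficient. Replace your third paragraph with that one-line implication. As written, its mechanism (innovations ``re-entering the state through the MA channel'') does not exist, because the MA lags enter additively. The inequality $\sqrt{V_{t+1}}\le\rho(1+\bar\theta_t)\sqrt{V_t}+c$ holds only because it is weaker than the true $\sqrt{V_{t+1}}\le\rho\sqrt{V_t}+c$.

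What your route buys over the paper's is a correct reading of the trade-off. The paper's product form delivers the qualitative message in one line, but as a stability threshold it is not supported. Once $\|\mathcal{M}\|<1$, bounded MA persistence cannot destroy mean-square boundedness, so $\bar\theta$ enters the \emph{level} of the bound, not the threshold. The claim that persistence requires stronger contraction survives as a performance statement. Your bound keeps the asymptotic root-mean-square error below a tolerance $\epsilon$ whenever $\rho\le 1-\|\mathcal{B}\|\bar\sigma(1+\bar\theta)/\epsilon$, so the admissible contraction level tightens linearly in $\bar\theta$.

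Your route also surfaces a hypothesis the statement leaves implicit: $\sup_t\mathbb{E}\|\boldsymbol{\eta}_t\|^2<\infty$. Assumption~\ref{ass:innov_reg} on $\boldsymbol{\xi}_t$ does not provide this; supplying it is the role of Level~1 (Theorem~\ref{thm:garch_stability}).

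Finally, do not call $\|\mathcal{M}\|<1$ ``sharp'', and avoid denoting its bound by $\rho$. The paper reserves $\rho(\cdot)$ for spectral radius, and for constant $\mathcal{M}$ the sharp condition is $\rho(\mathcal{M})<1$ (Theorem~\ref{thm:tracking_stability}).
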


Proposition~\ref{prop:ar_ma_tradeoff} quantifies the engineering tension between the perturbation
environment and the required control bandwidth. Importantly, the object governed by
\eqref{eq:ar_ma_bound} is the \emph{residual} forcing that remains after feasible offsetting.
Thus, the space environment can be severe provided the craft can match it closely enough that the
innovation process driving $\mathbf{e}_t$ remains regular.

Over-actuated designs (larger $m$, the number of available jets; see Section~\ref{sec:jets}) can distribute corrective effort across jets and relax thermal and duty-cycle constraints, but the governing stability inequalities remain of the same form. Minimal configurations that achieve three-dimensional reachability with few jets have naturally tighter margins, implying that operational policies (e.g.\ duty cycles, warm bias thrust, and phase-aware correction scheduling) must be selected more conservatively to maintain tracking performance under worst-case disturbance sequences.

\section{Tracking as a stochastic dynamical system}
\label{sec:tracking}

The tracking error $\mathbf{e}_t = \mathbf{s}_t - \mathbf{o}_t$ is shaped by two distinct channels, and their relative contributions determine whether residual dynamics behave as passive or actively controlled. The \emph{disturbance channel} carries the imprint of the space environment: solar-wind persistence, episodic impulses, and variable outgassing induce temporal autocorrelation and clustered energy in $\mathbf{s}_t$. In the passive regime ($\mathbf{o}_t\equiv\mathbf{0}$), these features propagate directly into the tracking-error innovation $\boldsymbol{\varepsilon}_t$. Under active control, the offset $\mathbf{o}_t$ partially cancels disturbance structure before it reaches $\mathbf{e}_t$, so the innovation captures only the unrejected component---effective stabilization attenuates both persistence and energy clustering. However, corrective action opens a second pathway: the \emph{actuator channel}, through which thruster memory---thermal inertia, valve latency, feed-system dynamics---introduces its own serial dependence into $\mathbf{e}_t$ via the control term $B_t\mathbf{c}_t$ in the error recursion \eqref{eq:closed_loop_generic}. This dependence is ideally stable and fast-decaying, so that thruster memory does not reintroduce the persistence that corrective action removed from the disturbance channel. Formally, we parameterize the disturbance channel with a moving-average (MA) structure for persistence and a conditional-heteroskedasticity (GARCH) structure for clustered energy (\eqref{eq:dist_ma}--\eqref{eq:H_sre}), and the actuator channel as a VARMA process on $\mathbf{c}_t$ (\eqref{eq:c_varma}). The reduced-form companion representation \eqref{eq:tv_state} absorbs both channels into a single state recursion, so the multi-level stability conditions derived in Section~\ref{sec:stability} govern the combined dynamics. This two-channel structure is also what makes the residual diagnostics in Section~\ref{sec:results} informative: under comparable environmental forcing, an actively controlled body should exhibit weaker persistence, less variance clustering, and more regular innovations than a passive body, because the actuator channel has absorbed part of the disturbance structure before it becomes observable in $\mathbf{e}_t$.

The disturbance channel spans multiple regimes. Within the heliosphere ($\lesssim$100~AU), solar radiation pressure contributes a quasi-deterministic acceleration scaling as $r^{-2}$ with cross-sectional area and reflectivity~\cite{wertz1999spacecraft}. Episodic events---coronal mass ejections, solar energetic particle bursts, and stream interaction regions---add transient accelerations with heavy-tailed amplitudes~\cite{richardson2018solar,tsurutani2003extreme}. 

For volatile-rich bodies, thermally driven sublimation produces a reaction force that is predominantly anti-sunward (opposing the incident solar flux), with a smaller transverse component arising from rotation-mediated thermal lag and surface heterogeneity~\cite{mumma2011composition}. The resulting non-gravitational acceleration has a partially predictable component (absorbable into guidance) and a stochastic residual. Cometary rotation periods typically range from 5--70~hours, with most clustering between 5--20~hours~\cite{Kokotanekova2017Rotation,simon2019comet}; this rotation modulates both disturbance coupling and available control authority. Beyond the heliosphere, galactic cosmic rays and the interstellar medium introduce lower-intensity but persistent perturbations~\cite{jokipii1998particle}. 

Overall, these sources produce the two characteristic features---persistence and episodic energy clustering---that motivate the MA--GARCH parameterization of the innovation channel. We now (i)~provide a concrete mechanism by which $\mathbf{o}_t$ arises from jet-level actuation, formalizing the actuator channel, and (ii)~specify the stochastic model for the unrejected tracking-error innovation, formalizing the disturbance channel.

\subsection{Jet-based actuation: active trajectory control}
\label{sec:jets}

We now relate the corrective offset $\mathbf{o}_t$ to jet-level actuation. This link is not required for the stability framework, but it helps interpret how corrective action may appear in observables such as astrometric residuals, duty cycles, and photometric variability.

Let $m$ be the number of available jets. Each jet $i$ produces a body-frame thrust vector
$f_{i,t}\mathbf{b}_i$ with scalar magnitude $f_{i,t}\ge 0$ and fixed direction $\mathbf{b}_i\in\mathbb{R}^3$
in the body frame. Define the jet command vector
\begin{equation}
\mathbf{u}_t \equiv (f_{1,t},\dots,f_{m,t})'.
\end{equation}
Let $R_t\in SO(3)$ be the body-to-inertial rotation matrix and $M_t$ the (slowly varying) mass.
The inertial-frame corrective acceleration produced by the jets is
\begin{equation}
\mathbf{c}_t = \frac{1}{M_t} R_t \sum_{i=1}^m f_{i,t}\mathbf{b}_i,
\label{eq:c_general}
\end{equation}
which can be written compactly as $\mathbf{c}_t = G_t \mathbf{u}_t$ for an appropriate time-varying mapping $G_t$
that depends on configuration, attitude, and mass. When rotation is rapid relative to control sampling, $R_t$ induces
periodically time-varying authority even for a fixed body-frame firing pattern.

Practical three-dimensional steering requires that the \emph{net} force directions available over a control interval span
three independent inertial directions; this can be achieved either with multiple fixed jets or by combining fewer jets with
rotation-mediated authority and/or dedicated attitude control. The theory and signatures developed in this paper are not tied
to a specific minimal geometry: we write the analysis for general $m$ and allow both under-actuated (rotation-mediated) and
fully actuated (dedicated attitude jets) regimes~\cite{wie2008spacecraft,markley2014fundamentals}. The standing requirement
is simply that the closed-loop mapping from navigation information to net corrective acceleration is well defined and bounded.

\begin{assumption}[Control mapping regularity]
\label{ass:h_regularity}
At each time $t$, the control system selects jet commands $\mathbf{u}_t$ as a measurable function of an information set
$\mathcal{I}_t$ (state estimates, attitude, actuator states):
\[
\mathbf{u}_t = h(\mathcal{I}_t,\mathcal{J}),
\]
where $\mathcal{J}$ denotes the jet configuration. The mapping $h$ remains defined under sensor noise and transient radiation effects,
and satisfies a configuration-dependent bound $\|\mathbf{u}_t\|\le \bar{u}(\mathcal{J})$.
\end{assumption}

In the linearized discrete-time error dynamics below, $\mathbf{o}_t$ should be interpreted as the \emph{effective} correction
applied through the input channel. Linearizing around the reference trajectory $\mathbf{x}^{\mathrm{ref}}_t$, acceleration enters
the error update via a (possibly time-varying) matrix $B_t$, so the induced correction term is $B_t\mathbf{c}_t$. This makes the
identification explicit:
\begin{equation}
\mathbf{o}_t \equiv -\,B_t \mathbf{c}_t,
\label{eq:o_from_c}
\end{equation}
up to sign conventions in the chosen error recursion. Rotation enters through $G_t$ in \eqref{eq:c_general} and therefore induces
periodic modulation in the effective correction term $B_tG_t\mathbf{u}_t$.
The practical realization of this mapping is subject to resource constraints.

\begin{remark}[Propellant and operational constraints]
\label{rem:propellant}
For missions exploiting cometary volatiles as propellant~\cite{mumma2011composition,altwegg2019rosina}, available thrust authority depends on extraction,
processing, and feed-system constraints. Operational considerations may favor quasi-continuous low-throttle ``warm'' modes over strict on/off strategies,
which affects stability margins and observable duty cycles.
\end{remark}

\subsection{Perturbation properties in space}
\label{sec:perturbations_properties}

We now formalize the residual dynamics implied by the decomposition in Section~\ref{sec:problem} and the actuation mapping above.
The stability and inference objects are the tracking error $\mathbf{e}_t$ and its innovation process. Importantly, the stochastic
assumptions below are imposed on the \emph{unrejected tracking residual}---the portion of $\mathbf{s}_t$ not cancelled by feasible
$\mathbf{o}_t$---rather than on the physical disturbance field $\mathbf{s}_t$ in isolation.

We model the tracking error in linearized discrete-time form:
\begin{equation}
\mathbf{e}_{t+1} = A_t\mathbf{e}_t + B_t\mathbf{c}_t + \boldsymbol{\varepsilon}_{t+1},
\label{eq:closed_loop_generic}
\end{equation}
where $\boldsymbol{\varepsilon}_{t+1}$ is the tracking-error innovation, equivalently defined by \eqref{eq:eps_def_from_e}.
The matrices $A_t$ and $B_t$ may be time-varying due to changing geometry, attitude modulation, and operating regimes.

Many stabilizing controllers also introduce internal state (e.g.\ filtered errors, adaptive gain dynamics, actuator state), which
motivates a reduced-form companion representation in a state $\mathbf{z}_t$~\cite{hamilton1994,lutkepohl2005}:
\begin{equation}
\mathbf{z}_{t+1} = \mathcal{M}_t\mathbf{z}_t + \mathcal{N}_t \boldsymbol{\eta}_t.
\label{eq:tv_state}
\end{equation}
This is not a commitment to a particular controller; it provides a convenient vehicle for generic stability conditions under time
variation induced by rotation and operating regimes. This formulation also accommodates spatial feedback that may be present in the physical system but appears instantaneous at the temporal resolution of observations. When the observation frequency is coarser than the underlying spatio-temporal dynamics, spatial interactions among coupled state components can be absorbed into the time series coefficient structure by inverting the contemporaneous dependence matrix, yielding a nonlinear reduced-form whose stability is governed by the same spectral-radius conditions derived below~\cite{andree2020dynamic}. Imposing contraction on the combined coefficient matrix then implies stability of both the spatial and temporal components, so the time series stability framework adopted here remains sufficient even under instantaneous spatial coupling at the measurement cadence.

Deep-space residual disturbances are typically neither independent nor homoskedastic. We represent persistence in the innovation
$\boldsymbol{\varepsilon}_t$ via a moving-average structure,
\begin{equation}
\boldsymbol{\varepsilon}_t = \sum_{j=0}^{q} \Theta_{j,t}\boldsymbol{\eta}_{t-j},\qquad \Theta_{0,t}\equiv I_3,
\label{eq:dist_ma}
\end{equation}
and volatility clustering via a multivariate conditional covariance recursion for the primitive shock $\boldsymbol{\eta}_t$,
\begin{equation}
\begin{aligned}
\boldsymbol{\eta}_t &= H_t^{1/2}\boldsymbol{\xi}_t,\\
\mathrm{vec}(H_t)
&=\mathbf{w}_t
+\mathcal{A}_t\,\mathrm{vec}\!\left(\boldsymbol{\eta}_{t-1}\boldsymbol{\eta}_{t-1}'\right) \\
&\quad +\mathcal{B}_t\,\mathrm{vec}(H_{t-1}).
\end{aligned}
\label{eq:H_sre}
\end{equation}
This nests common multivariate GARCH forms~\cite{bougerol1992strict,straumann2006quasi} and allows time variation across operating regimes.
Together, \eqref{eq:dist_ma}--\eqref{eq:H_sre} provide a concrete parameterization of the innovation channel described abstractly in Section~\ref{sec:problem}.

\begin{assumption}[Tracking-error innovation regularity]
\label{ass:armagarch_reg}
Let $\mathcal{F}_t$ denote the filtration generated by the observation process and all realized control actions up to time $t$
(including any internal controller state). The standardized innovations satisfy
\[
\mathbb{E}[\boldsymbol{\xi}_t\mid \mathcal{F}_{t-1}]=\mathbf{0},\qquad
\mathbb{E}[\boldsymbol{\xi}_t\boldsymbol{\xi}_t']=I_3,\qquad
\mathbb{E}\|\boldsymbol{\xi}_t\|^{2+\delta}<\infty
\]
for some $\delta>0$.
\end{assumption}

Assumption~\ref{ass:armagarch_reg} is a closed-loop regularity condition on $\boldsymbol{\varepsilon}_t$ (and hence on the innovations
of $\{\mathbf{e}_t\}$); it does not restrict $\{\mathbf{s}_t\}$ or $\{\mathbf{o}_t\}$ individually. In the passive regime
$\mathbf{o}_t\equiv\mathbf{0}$, it reduces to a physical regularity condition on $\boldsymbol{\varepsilon}_t\equiv\mathbf{s}_t$.
More generally, even if $\mathbf{s}_t$ exhibits rare/extreme episodes, bursty clustering, or regime shifts, the assumption should be read
as requiring that the effective innovation channel driving tracking error still admits the (possibly time-varying) ARMA--GARCH representation
\eqref{eq:dist_ma}--\eqref{eq:H_sre} with standardized shocks having controlled moments. Operationally, this corresponds to the guidance--navigation--control
stack and available actuation authority regularizing the net residual $\mathbf{s}_t-\mathbf{o}_t$ even when $\mathbf{s}_t$ itself is irregular.
This observation motivates a concrete diagnostic strategy.

\begin{remark}[Craft as regularizers of extreme space perturbations]
\label{rem:regularization_test}
If $\mathbf{s}_t$ exhibits episodic extreme behavior that would violate Assumption~\ref{ass:armagarch_reg} for a natural body, a passive object must inherit this
irregularity in its observed innovations. By contrast, a sufficiently stabilized active craft can generate offsets $\mathbf{o}_t$ such that the resulting tracking
error $\mathbf{e}_t=\mathbf{s}_t-\mathbf{o}_t$ remains well described by a stable, possibly time-varying ARMA--GARCH innovation channel with bounded conditional energy.
This motivates a signature test based on whether filtered tracking residuals admit regular standardized innovations across identified disturbance regimes in which the underlying disturbance environment does not.
\end{remark}

\section{Multi-level stability conditions}
\label{sec:stability}

Stable tracking in ISO-class missions is inherently multi-level, because the dynamics that can destabilize a trajectory originate from distinct physical domains. The stochastic space environment (solar wind, radiation pressure, outgassing variability) drives conditional variance in the disturbance channel and can produce escalating forcing if left unchecked (Level~1). The tracking loop itself governs whether correction epochs compound or attenuate past errors (Level~2). The actuator hardware introduces its own dynamics through thermal inertia, valve latency, and feed-system transients (Level~3). The rotation of the host body modulates thrust direction periodically, creating phase-dependent variation in control authority (Level~4). Stability at each level is necessary: violating any one can yield unbounded residual energy, persistent drift, actuator-driven oscillation, or phase-dependent loss of authority. This layered requirement reflects the broader challenge of achieving reliable on-orbit behavior under nonlinear couplings and multi-field disturbances in spacecraft dynamics and control~\cite{zhang2024advances_sdc}.

\paragraph{Level 1: disturbance-energy stability (GARCH).}
We first require stability of the conditional disturbance energy in \eqref{eq:H_sre}, so that stochastic forcing in the \emph{tracking-error innovation} channel does not exhibit explosive conditional covariance dynamics. Under Assumption~\ref{ass:innov_reg}, this ensures finite second moments for the innovation channel, which is necessary for bounded mean-square tracking error.

\begin{theorem}[Volatility recursion stability]
\label{thm:garch_stability}
Consider \eqref{eq:H_sre} with constant coefficients $(\mathcal{A},\mathcal{B})$. A unique stationary solution with $\mathbb{E}\|H_t\|<\infty$ exists if and only if
\begin{equation}
\rho(\mathcal{A}\otimes\mathcal{A}+\mathcal{B}\otimes\mathcal{B})<1,
\label{eq:garch_spectral}
\end{equation}
where $\rho(\cdot)$ denotes spectral radius. Under time-varying coefficients, stability requires a negative top Lyapunov exponent for the associated matrix product~\cite{bougerol1992strict,straumann2006quasi}.
\end{theorem}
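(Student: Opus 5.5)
The plan is to cast \eqref{eq:H_sre} as a stochastic recurrence equation and then apply the Bougerol--Picard theory cited in the statement. Write $Y_t \equiv \mathrm{vec}(H_t)$. Using $\boldsymbol{\eta}_{t-1} = H_{t-1}^{1/2}\boldsymbol{\xi}_{t-1}$, we have
\[
\mathrm{vec}(\boldsymbol{\eta}_{t-1}\boldsymbol{\eta}_{t-1}') = (H_{t-1}^{1/2}\otimes H_{t-1}^{1/2})\,\mathrm{vec}(\boldsymbol{\xi}_{t-1}\boldsymbol{\xi}_{t-1}').
\]
For the multivariate GARCH forms nested by \eqref{eq:H_sre} (diagonal and BEKK-type), this term can be written as $\mathcal{C}(\boldsymbol{\xi}_{t-1})Y_{t-1}$ for a random matrix $\mathcal{C}(\cdot)$ that is linear in $\mathrm{vec}(\boldsymbol{\xi}_{t-1}\boldsymbol{\xi}_{t-1}')$. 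The recursion then takes the affine form
\[
Y_t = \mathbf{w} + \Phi_t Y_{t-1}, \qquad \Phi_t \equiv \mathcal{A}\,\mathcal{C}(\boldsymbol{\xi}_{t-1}) + \mathcal{B},
\]
with $\{\Phi_t\}$ i.i.d.\ (or stationary ergodic) under Assumption~\ref{ass:innov_reg}. The last sentence of the theorem, for time-varying coefficients, is then Bougerol--Picard directly. A unique strictly stationary nonanticipative solution exists iff the top Lyapunov exponent $\gamma = \lim_k k^{-1}\mathbb{E}\log\|\Phi_k\cdots\Phi_1\|$ is negative. That solution is $Y_t = \sum_{k\ge 0}\Phi_t\cdots\Phi_{t-k+1}\mathbf{w}$, and I would invoke \cite{bougerol1992strict} for this step rather than reprove it.

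For the constant-coefficient moment statement, I would control the series $\sum_k \Phi_t\cdots\Phi_{t-k+1}\mathbf{w}$ in mean. The mean-square size of this series is governed by $\mathbb{E}[\Phi_t\otimes\Phi_t]$. Expanding $\Phi_t\otimes\Phi_t$ gives the terms $\mathcal{A}\otimes\mathcal{A}$ (weighted by fourth moments of $\boldsymbol{\xi}$), $\mathcal{B}\otimes\mathcal{B}$, and cross terms. Under the normalization $\mathbb{E}[\boldsymbol{\xi}\boldsymbol{\xi}']=I_3$, I would aim to show the cross terms are dominated, so that $\rho(\mathcal{A}\otimes\mathcal{A}+\mathcal{B}\otimes\mathcal{B})<1$ forces $\|\mathbb{E}[\Phi_k\cdots\Phi_1 \otimes \Phi_k\cdots\Phi_1]\|$ to decay geometrically. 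Jensen then gives $\gamma<0$, so the stationary solution exists and is unique, and Cauchy--Schwarz together with geometric summation gives $\mathbb{E}\|H_t\|<\infty$. This establishes the sufficiency direction.

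For necessity, I would take expectations in the recursion. Conditional unbiasedness gives $\mathbb{E}[\mathrm{vec}(\boldsymbol{\eta}_{t-1}\boldsymbol{\eta}_{t-1}')\mid\mathcal{F}_{t-2}] = Y_{t-1}$, so $m_t=\mathbb{E}Y_t$ satisfies $m_t=\mathbf{w}+(\mathcal{A}+\mathcal{B})m_{t-1}$. Nonnegativity of the coefficients (positive-semidefiniteness of $H_t$) and a Perron--Frobenius argument with $\mathbf{w}$ in the interior of the cone then show that a finite stationary mean requires a spectral-radius-below-one condition.

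I expect this necessity step to be the main obstacle. The expectation argument naturally produces $\rho(\mathcal{A}+\mathcal{B})<1$ rather than the Kronecker form in \eqref{eq:garch_spectral}, and the two coincide only under additional structure. Examples are the diagonal/scalar case, where one can pass to squared norms, or reading ``$\mathbb{E}\|H_t\|<\infty$'' in the stronger mean-square sense, where the Kronecker object is natural. My first attempt would be to fix that interpretation and verify the equivalence in the diagonal BEKK case. Failing that, I would state the result as a sufficient condition, paired with the exact first-moment criterion.
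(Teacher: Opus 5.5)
There is a genuine gap, and it sits in your sufficiency step, not in necessity. That step cannot be completed, because the ``if'' direction of the statement is false for the vec-form recursion \eqref{eq:H_sre}. Take the scalar model $\mathcal{A}=\alpha I_9$, $\mathcal{B}=\beta I_9$, i.e.\ $H_t=W+\alpha\boldsymbol{\eta}_{t-1}\boldsymbol{\eta}_{t-1}'+\beta H_{t-1}$ with $W\succ 0$. Let $\boldsymbol{\xi}_t$ be i.i.d.\ $N(\mathbf{0},I_3)$ and $\alpha=\beta=0.7$. Then $\rho(\mathcal{A}\otimes\mathcal{A}+\mathcal{B}\otimes\mathcal{B})=\alpha^2+\beta^2=0.98<1$. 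Yet any stationary solution with $\mathbb{E}\|H_t\|<\infty$ would satisfy $\mathbb{E}\,\mathrm{tr}H_t=\mathrm{tr}W+1.4\,\mathbb{E}\,\mathrm{tr}H_t$, which is impossible.

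The cross terms you hope to dominate push the other way. In the scalar case, $\mathbb{E}\Phi_t^2=\alpha^2\mathbb{E}\xi^4+2\alpha\beta+\beta^2\ge(\alpha+\beta)^2$. Moreover, under Assumption~\ref{ass:innov_reg} (only $2+\delta$ moments), $\mathbb{E}[\Phi_t\otimes\Phi_t]$ need not be finite at all. So your fallback of keeping \eqref{eq:garch_spectral} as a sufficient condition also fails. Neither the diagonal-BEKK check nor the mean-square reading rescues it: in the univariate Gaussian case, $\mathbb{E}h_t^2<\infty$ iff $3\alpha^2+2\alpha\beta+\beta^2<1$. In any case, $\mathbb{E}\|H_t\|<\infty$ is a first-moment property, so the governing operator is $\mathcal{A}+\mathcal{B}$, not a second-moment object.

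Separately, your reduction to $Y_t=\mathbf{w}+\Phi_tY_{t-1}$, with $\Phi_t$ depending on $\boldsymbol{\xi}_{t-1}$ alone, is false in dimension greater than one. The map $H\mapsto H^{1/2}\boldsymbol{\xi}\boldsymbol{\xi}'H^{1/2}$ is not linear: for diagonal $H$ the off-diagonal entry is $\sqrt{h_{11}h_{22}}\,\xi_1\xi_2$. Hence Bougerol--Picard does not apply to the multivariate recursion, and the time-varying sentence is not ``Bougerol--Picard directly.'' Only the conditional mean $\mathbb{E}[\mathrm{vec}(\boldsymbol{\eta}_t\boldsymbol{\eta}_t')\mid\mathcal{F}_{t-1}]=\mathrm{vec}(H_t)$ is linear. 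Even that needs $\mathbb{E}[\boldsymbol{\xi}_t\boldsymbol{\xi}_t'\mid\mathcal{F}_{t-1}]=I_3$, which Assumption~\ref{ass:innov_reg} does not supply; it does not give i.i.d.\ or ergodic $\boldsymbol{\xi}_t$ either.

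Your necessity computation is the correct part, and it exposes a defect in the statement rather than in your reasoning. The paper gives no derivation, only citations to Bougerol--Picard and Straumann. Those supply SRE machinery and univariate GARCH results, not this multivariate equivalence; compare the paper's own Remark~\ref{rem:time_varying}.

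The Kronecker expression is the BEKK-native form of the first-moment criterion. If $H_t=W+A'\boldsymbol{\eta}_{t-1}\boldsymbol{\eta}_{t-1}'A+B'H_{t-1}B$ with $3\times 3$ matrices $A,B$, then \eqref{eq:H_sre} holds with $\mathcal{A}=A'\otimes A'$ and $\mathcal{B}=B'\otimes B'$, and $\rho(A\otimes A+B\otimes B)=\rho(\mathcal{A}+\mathcal{B})$. Applying the Kronecker square to the $9\times 9$ vec matrices, as the statement does, effectively squares the parameters. In the notation of \eqref{eq:H_sre} the right condition is $\rho(\mathcal{A}+\mathcal{B})<1$, exactly what your cone argument yields.

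For entrywise nonnegative $\mathcal{A},\mathcal{B}$, the ``only if'' of the stated theorem survives. The entrywise bound $\mathcal{A}\otimes\mathcal{A}+\mathcal{B}\otimes\mathcal{B}\le(\mathcal{A}+\mathcal{B})\otimes(\mathcal{A}+\mathcal{B})$ and Perron--Frobenius monotonicity give $\rho(\mathcal{A}\otimes\mathcal{A}+\mathcal{B}\otimes\mathcal{B})\le\rho(\mathcal{A}+\mathcal{B})^2$. So once your argument gives $\rho(\mathcal{A}+\mathcal{B})<1$, the stated condition follows; it is the ``if'' that is unprovable.

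A correct version would state $\rho(\mathcal{A}+\mathcal{B})<1$ for PSD-cone-preserving $\mathcal{A},\mathcal{B}$. Prove sufficiency not through $\mathbb{E}[\Phi_t\otimes\Phi_t]$ but through the linear Foster--Lyapunov function $V(H)=1+\langle\mathbf{v},\mathrm{vec}(H)\rangle$, with $\mathbf{v}=\sum_{k\ge 0}((\mathcal{A}+\mathcal{B})')^k\mathrm{vec}(I_3)$, the vec of a positive definite matrix. This gives a geometric drift using only the linear conditional mean. For existence and uniqueness, add irreducibility and small-set conditions on the law of $\boldsymbol{\xi}_t$; this is the Markov-chain route used for BEKK models.
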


Theorem~\ref{thm:garch_stability} provides a necessary and sufficient condition ensuring that the conditional covariance of the tracking-error innovation does not grow without bound. In engineering terms, this means the energy of the disturbance channel (after feasible offsetting) must be self-regulating: large shocks cannot trigger an escalating sequence of ever-larger conditional variances. In practice, this constrains trajectory design: the mission planner selects transit corridors that keep the residual forcing environment within the self-regulating regime, avoiding regions where shock-driven variance escalation could violate \eqref{eq:garch_spectral}.

\paragraph{Level 2: outer-loop tracking contraction.}
Given that disturbance energy is bounded (Level~1), we next require that the outer-loop error dynamics contract so that the effect of initialization and past disturbances decays over time. In the reduced-form representation \eqref{eq:tv_state}, this corresponds to stability of the companion state recursion, with contraction governed by $\rho(\mathcal{M})<1$ in the time-invariant case. Given Assumption~\ref{ass:innov_reg}, geometric contraction implies a unique stationary solution with finite second moments.

\begin{theorem}[Geometric stability via spectral radius]
\label{thm:tracking_stability}
Under Assumption~\ref{ass:innov_reg}, if $\mathcal{M}_t\equiv\mathcal{M}$ is constant with $\rho(\mathcal{M})<1$, then the companion state in \eqref{eq:tv_state} has a unique stationary solution with finite second moments, and the effect of initialization decays geometrically~\cite{hamilton1994,lutkepohl2005,andree2020dynamic}.
\end{theorem}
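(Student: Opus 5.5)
The argument is the standard causal-solution construction for a stable linear recursion. We treat \eqref{eq:tv_state} with $\mathcal{M}_t\equiv\mathcal{M}$ and, for simplicity, $\mathcal{N}_t\equiv\mathcal{N}$ constant, driven by $\boldsymbol{\eta}_t$. By \eqref{eq:H_sre} and Assumption~\ref{ass:innov_reg}, $\boldsymbol{\eta}_t$ is a martingale difference with respect to $\mathcal{F}_{t-1}$. We also need $\sup_t\mathbb{E}\|\boldsymbol{\eta}_t\|^2<\infty$. Since $\mathbb{E}\|\boldsymbol{\eta}_t\|^2=\mathbb{E}\,\mathrm{tr}(H_t)$, this follows from Level~1 (Theorem~\ref{thm:garch_stability}), so we take \eqref{eq:garch_spectral} as a standing hypothesis.

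\textbf{Step 1: a norm in which $\mathcal{M}$ contracts.} Because $\rho(\mathcal{M})<1$, fix $\lambda\in(\rho(\mathcal{M}),1)$. Gelfand's formula then gives a constant $C$ with $\|\mathcal{M}^k\|\le C\lambda^k$ for all $k$. Equivalently, one can solve the discrete Lyapunov equation $P-\mathcal{M}'P\mathcal{M}=I$ for some $P\succ0$.

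\textbf{Step 2: construct the stationary solution.} Define
\[
\mathbf{z}_t^\ast=\sum_{k=0}^{\infty}\mathcal{M}^k\mathcal{N}\boldsymbol{\eta}_{t-1-k}.
\]
Because the terms are martingale differences, the cross terms vanish in $L^2$. Hence $\mathbb{E}\|\sum_{k=K}^{L}\mathcal{M}^k\mathcal{N}\boldsymbol{\eta}_{t-1-k}\|^2=\sum_{k=K}^{L}\mathbb{E}\|\mathcal{M}^k\mathcal{N}\boldsymbol{\eta}_{t-1-k}\|^2\le C^2\|\mathcal{N}\|^2\sup_s\mathbb{E}\|\boldsymbol{\eta}_s\|^2\sum_{k=K}^{L}\lambda^{2k}$. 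The partial sums are therefore Cauchy in $L^2$, so the series converges in mean square, and also almost surely by a Borel--Cantelli argument. The limit satisfies \eqref{eq:tv_state} by a direct shift of the index. Its second moment is bounded by $C^2\|\mathcal{N}\|^2\sup_s\mathbb{E}\|\boldsymbol{\eta}_s\|^2/(1-\lambda^2)$. When $\boldsymbol{\eta}_t$ is strictly stationary (the unique stationary GARCH solution of Theorem~\ref{thm:garch_stability}), $\mathbf{z}_t^\ast$ is a measurable shift-invariant functional of it and hence stationary.

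\textbf{Step 3: uniqueness and forgetting of initialization.} Take any two solutions $\mathbf{z}_t,\tilde{\mathbf{z}}_t$ driven by the same shocks. Their difference obeys $\mathbf{z}_{t+k}-\tilde{\mathbf{z}}_{t+k}=\mathcal{M}^k(\mathbf{z}_t-\tilde{\mathbf{z}}_t)$, so
\[
\mathbb{E}\|\mathbf{z}_{t+k}-\tilde{\mathbf{z}}_{t+k}\|^2\le C^2\lambda^{2k}\,\mathbb{E}\|\mathbf{z}_t-\tilde{\mathbf{z}}_t\|^2 .
\]
This gives the geometric decay of initialization effects, which is the second condition in \eqref{eq:stability_objective}. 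For uniqueness, suppose two stationary solutions both have finite second moments. Letting $k\to\infty$ and using the bounded second moments at time $t$, the right-hand side tends to zero, so the two solutions coincide almost surely.

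\textbf{Main obstacle.} The delicate step is the moment requirement on $\boldsymbol{\eta}_t$. Assumption~\ref{ass:innov_reg} constrains only $\boldsymbol{\xi}_t$, and the conditional heteroskedasticity in \eqref{eq:H_sre} could make $\mathbb{E}\,\mathrm{tr}(H_t)$ infinite. I would handle this by explicitly invoking the Level~1 condition \eqref{eq:garch_spectral}, which gives stationarity with $\mathbb{E}\|H_t\|<\infty$, and by stating that the theorem is conditional on it, as the paper's layered ordering suggests. If that hypothesis were unavailable, I would fall back on almost-sure convergence only: $\mathbb{E}\log^+\|\boldsymbol{\eta}_t\|<\infty$ together with geometric decay of $\|\mathcal{M}^k\|$ still gives almost-sure convergence of the series, but not finite second moments.
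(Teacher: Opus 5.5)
Your Steps~1--3 are the standard causal MA($\infty$) construction that the paper defers to its citations, and your Lyapunov matrix $P$ gives exactly the contractive norm the paper invokes after the theorem. They are correct once $\sup_t\mathbb{E}\|\boldsymbol{\eta}_t\|^2<\infty$ is in hand. You are also right, and more careful than the paper's statement, that Assumption~\ref{ass:innov_reg} alone does not supply this bound.

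The gap is where you source it: condition \eqref{eq:garch_spectral} does not imply $\mathbb{E}\|H_t\|<\infty$, despite the wording of Theorem~\ref{thm:garch_stability}. Already for a scalar (or componentwise diagonal) GARCH(1,1), \eqref{eq:garch_spectral} reads $\alpha^2+\beta^2<1$. Take $\alpha=0.1$, $\beta=0.9$, $\omega>0$ and Gaussian $\xi_t$.
\begin{itemize}
\item Then $\alpha^2+\beta^2=0.82$, and a strictly stationary solution exists, since strict Jensen gives $\mathbb{E}\log(\beta+\alpha\xi_t^2)<\log(\alpha+\beta)=0$.
\item Taking expectations gives $\mathbb{E}h_t=\omega+\mathbb{E}h_{t-1}$, which is incompatible with a finite stationary mean, so $\mathbb{E}h_t=\infty$.
\item The scalar recursion $z_{t+1}=\mathcal{M}z_t+\eta_t$ then has $\mathbb{E}z_{t+1}^2\ge\mathbb{E}h_t=\infty$ for every $|\mathcal{M}|<1$.
\end{itemize}
So with \eqref{eq:garch_spectral} as your added hypothesis, the finite-second-moment conclusion fails, and only your almost-sure fallback survives. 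The Kronecker form is the Engle--Kroner covariance-stationarity condition when $\mathcal{A},\mathcal{B}$ are BEKK matrices. But \eqref{eq:H_sre} is written with vec coefficients, for which the analogue is $\rho(\mathcal{A}+\mathcal{B})<1$.

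The repair is to make $\sup_t\mathbb{E}\|\boldsymbol{\eta}_t\|^2<\infty$ the explicit hypothesis. To express it through \eqref{eq:H_sre} you also need $\mathbb{E}[\boldsymbol{\xi}_t\boldsymbol{\xi}_t'\mid\mathcal{F}_{t-1}]=I_3$. Assumption~\ref{ass:innov_reg} gives only the unconditional identity, so it does not justify your equality $\mathbb{E}\|\boldsymbol{\eta}_t\|^2=\mathbb{E}\,\mathrm{tr}(H_t)$. With the conditional identity, $\mathbb{E}\,\mathrm{vec}(H_t)=\mathbf{w}+(\mathcal{A}+\mathcal{B})\,\mathbb{E}\,\mathrm{vec}(H_{t-1})$, and $\rho(\mathcal{A}+\mathcal{B})<1$ delivers the bound.

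A smaller point in Step~3 concerns uniqueness. Fix the target time $s$ and apply your inequality with $t=s-k$, bounding $\mathbb{E}\|\mathbf{z}_{s-k}-\tilde{\mathbf{z}}_{s-k}\|^2$ uniformly in $k$ by stationarity, then let $k\to\infty$. Holding $t$ fixed and letting $k\to\infty$ only shows that the two solutions merge asymptotically, which is the forgetting property rather than uniqueness.
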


Theorem~\ref{thm:tracking_stability} ensures that tracking errors are progressively corrected rather than persisting or amplifying: the closed-loop system is contractive, so initial estimation errors and past disturbance accumulations do not permanently bias the trajectory. The spectral radius $\rho(\mathcal{M})$ is the operative condition because it is directly computable from the eigenvalues of the companion matrix. The link to the norm-based contraction condition in Proposition~\ref{prop:ar_ma_tradeoff} is that $\rho(\mathcal{M})<1$ guarantees the existence of a submultiplicative matrix norm under which $\|\mathcal{M}\|<1$, so that the Lyapunov contraction condition is satisfied; a proof is given by~\cite{andree2022causal}. In physical terms, a tracking deviation caused by an impulsive solar-wind event, an outgassing transient, or a debris-belt encounter decays geometrically over subsequent correction epochs---each disturbance leaves a diminishing imprint on the state rather than accumulating into secular drift. The spectral radius thus governs the effective memory horizon of the control loop: values near unity permit disturbances to linger across many epochs, while small values ensure rapid attenuation. This is the formal counterpart of the operational requirement identified in Section~\ref{sec:stability_over_steering}: the control system must prevent stochastic forcing from accumulating into long-run divergence from the planned path. Concretely, this is a precision requirement on the guidance loop: if corrections consistently undershoot or overshoot due to estimation bias or actuator calibration error, the spectral radius approaches unity and residuals drift toward a random walk.

\paragraph{Level 3: inner-loop actuation stability.}
Actuators typically exhibit memory through thermal inertia, valve latency, and feed-system constraints. Under Assumption~\ref{ass:h_regularity}, the control mapping remains bounded ($\|\mathbf{u}_t\|\le\bar{u}(\mathcal{J})$), so a generic representation for corrective acceleration dynamics is a stable VARMA recursion:
\begin{equation}
\mathbf{c}_t=\sum_{i=1}^{p_c}\Phi^{(c)}_i\mathbf{c}_{t-i}+\sum_{j=0}^{q_c}\Psi^{(c)}_j\mathbf{v}_{t-j},
\label{eq:c_varma}
\end{equation}
where $\mathbf{v}_t$ is driven by outer-loop error signals. Stability requires that the corresponding AR companion matrix has spectral radius strictly below one, ensuring finite-memory actuation and preventing self-excited oscillations or limit cycles that would be observable as persistent thrust-driven jitter~\cite{wie2008spacecraft}.

This condition ensures that the actuation subsystem itself does not introduce instability. Physically, valve response times, thermal transients in thruster chambers, and feed-system pressure dynamics must all settle within each correction cycle, so that hardware dynamics do not carry over and interfere with subsequent commands. If the actuator dynamics have eigenvalues near or above unity---for instance, a sluggish valve that has not fully closed before the next correction fires, or a thermal lag that biases thrust magnitude across consecutive epochs---the correction system could amplify rather than attenuate perturbations, producing oscillatory or divergent thrust patterns visible in both astrometric residuals and photometric signatures.

\paragraph{Level 4: rotation-mediated periodic stability.}
Finally, if net authority is modulated by spin phase through the attitude-dependent mapping in \eqref{eq:c_general}, then the relevant linearized dynamics are linear periodically time-varying. Stability is governed by Floquet multipliers~\cite{bittanti2009periodic}. Let $\mathcal{P}$ denote the monodromy matrix over one rotation cycle; stability requires $|\mu_i|<1$ for all eigenvalues $\mu_i$ of $\mathcal{P}$.

For cometary bodies with rotation periods in the viable range of approximately $5$--$20$~hours~\cite{Kokotanekova2017Rotation}---with observed values for interstellar comets ranging from roughly $7$ to $17$~hours across epochs~\cite{scarmato2026rotation,delaFuenteMarcos2025gtc}---Floquet analysis must reflect the relationship between control bandwidth and rotation period, including the possibility that the rotation state evolves under solar forcing. If corrective updates occur faster than one rotation cycle, periodic modulation tends to average out; if slower, phase-dependent authority can become the binding constraint and may dominate both stability margins and the observable structure of astrometric residuals. A disturbance arriving during a low-authority phase produces error that grows until the body rotates into a favorable orientation; Floquet stability ensures that the integrated correction over one full rotation contracts errors accumulated during any such phase. The constant-coefficient conditions stated above extend naturally to the time-varying setting.

\begin{remark}[Time-varying generalization]
\label{rem:time_varying}
The spectral-radius conditions in Levels~1--3 ($\rho(\mathcal{A}\otimes\mathcal{A}+\mathcal{B}\otimes\mathcal{B})<1$, $\rho(\mathcal{M})<1$, and the actuator companion radius) are stated for constant coefficients. Under time-varying coefficients, the natural generalization replaces spectral-radius bounds with top Lyapunov exponent conditions: stability holds if the expected log contraction rate is strictly negative~\cite{bougerol1992strict}. For the multivariate conditional-covariance recursion \eqref{eq:H_sre}, the requisite almost-sure contraction results are fully established in the univariate case but remain incomplete in the general multivariate setting~\cite{straumann2006quasi, andree2020dynamic}. From a statistical standpoint, expected log contraction is a relaxation: it permits transient episodes during which the spectral radius locally exceeds unity, provided these are offset on average by contractive phases. From an engineering standpoint, however, such transient instability is unfavorable---a craft that is stable on average but periodically loses tracking tolerance exposes the mission (and any occupants) to unacceptable risk at precisely the epochs when disturbances are strongest. For the purposes of this paper, the distinction is largely interpretive: the multi-level structure, the two-channel logic, and the residual diagnostics all carry over to the time-varying case, with the spectral-radius conditions replaced by their Lyapunov-exponent counterparts. The qualitative predictions---that effective stabilization attenuates persistence and clustering relative to the passive case---do not depend on whether the stability margins are uniform or hold only in expectation.
\end{remark}

The layered stability structure identified here is not unique to interstellar objects; analogous multi-level requirements arise wherever spacecraft operate under irregular forcing with limited actuation authority.

Station-keeping near asteroids and binary systems, for instance, involves irregular gravity fields, solar radiation pressure, and constrained actuation that impose analogous layered stability conditions~\cite{woo2016control_binary,shi2020stationkeeping_binary}. The ISO-class problem differs primarily in the dominant role of hyperbolic kinematics, the extended time horizon, and the stochastic nature of the perturbation environment, but the principle that stable control requires simultaneous satisfaction of conditions at multiple dynamical levels is common to both settings.

The framework also provides a compact mathematical lens on two planetary-defense paradigms. First, it formalizes \emph{controlled deflection}: using sustained, low-level actuation (e.g., volatile-driven jets or attached propulsion) to keep a hazardous body on a planned, collision-avoiding trajectory while rejecting perturbations. Here the objective is stable tracking: the closed-loop dynamics satisfy a contraction condition (Level~2; Theorem~\ref{thm:tracking_stability}) and the residual disturbance energy remains bounded (Level~1; Theorem~\ref{thm:garch_stability}), so small errors do not accumulate into long-run drift that could return the body toward a collision course. Second, it captures \emph{impulsive deflection} (kinetic impact): a single strike that produces an abrupt $\Delta V$ and shifts the body onto a new orbit~\cite{daly2023dart,thomas2023dart,vasile2008asteroid_deflection,sanchez2009multicriteria_neo}. Here a natural body has no corrective authority ($\mathbf{o}_t\equiv\mathbf{0}$), so the intent is to deliver a one-time change large enough that subsequent perturbations do not steer it back onto the original impact path.

\section{Observable signatures and test classes}
\label{sec:results}

The stability framework developed in Sections~\ref{sec:problem}--\ref{sec:stability} yields two complementary outputs. First, it implies \emph{design constraints and trade-offs} linking residual disturbance persistence to the control authority required for stable tracking (summarized in Section~\ref{sec:tradeoffs}). Second, it yields \emph{observable predictions} for astrometric residuals and related signatures that can, in principle, distinguish passive bodies from actively stabilized craft under comparable perturbation regimes.

The starting point is the unified trajectory decomposition introduced in Section~\ref{sec:problem}:
\begin{equation}
\mathbf{x}_t
=
\mathbf{x}^{\mathrm{grav}}_t
+
\mathbf{x}^{\mathrm{plan}}_t
+
\mathbf{s}_t
-
\mathbf{o}_t .
\tag{\ref{eq:master_decomp}}
\end{equation}
The key inference question is not whether the realized path equals a gravity-only trajectory, but whether the observed deviations behave like \emph{passive drift} ($\mathbf{o}_t\equiv \mathbf{0}$) or like \emph{actively corrected residual motion} ($\mathbf{o}_t\not\equiv \mathbf{0}$). The tests below operationalize this distinction by isolating signatures of error correction and disturbance attenuation in observable residuals.

\subsection{Two empirical test classes}
\label{sec:tests}

We formalize two complementary classes of tests. Both use the observed trajectory $\mathbf{x}_t$, but they differ in what they treat as identifiable and, consequently, in their robustness to model expansion on the passive side.

\begin{definition}[Class G: gravity-reference deviation tests]
\label{def:classG}
Define the gravity-reference deviation
\begin{equation}
\Delta_t \;\equiv\; \mathbf{x}_t - \mathbf{x}^{\mathrm{grav}}_t
\;=\;
\mathbf{x}^{\mathrm{plan}}_t + \mathbf{s}_t - \mathbf{o}_t.
\label{eq:delta_def}
\end{equation}
Class~G tests use $\Delta_t$ to assess whether departures from gravity can plausibly be explained by passive non-gravitational forcing alone. In the natural-comet regime, $\mathbf{x}^{\mathrm{plan}}_t\equiv\mathbf{0}$ and $\mathbf{o}_t\equiv\mathbf{0}$, so $\Delta_t=\mathbf{s}_t$.
\end{definition}

\begin{definition}[Class O: offset-detection tests]
\label{def:classO}
Let $\mathbf{x}^{\mathrm{ref}}_t=\mathbf{x}^{\mathrm{grav}}_t+\mathbf{x}^{\mathrm{plan}}_t$ be the gravity-referenced reference trajectory and $\mathbf{e}_t=\mathbf{x}_t-\mathbf{x}^{\mathrm{ref}}_t$ the tracking error. Class~O tests seek evidence that $\mathbf{o}_t\not\equiv\mathbf{0}$ by exploiting the residual identity
\begin{equation}
\mathbf{e}_t = \mathbf{s}_t - \mathbf{o}_t,
\label{eq:classO_core}
\end{equation}
which predicts error correction and disturbance attenuation under active stabilization.
\end{definition}

The tests developed below rely on an empirical baseline that generates short-horizon forecasts and observable residuals. While Class~G tests can be framed relative to a gravity reference, our focus is on Class~O: detecting corrective action through the behavior of filtered residuals---error correction and disturbance attenuation---which is less sensitive to the unknown long-horizon target.

\paragraph{Gravity-only is an incomplete null.}
A gravity-only comparison can detect the presence of non-gravitational acceleration, but it is not, by itself, a discriminator of artificial control. The interpretation is limited because (i) the intended path is generally unknown (a controlled craft chooses $\mathbf{x}^{\mathrm{plan}}_t$), (ii) natural processes such as outgassing and radiation pressure can generate substantial departures from gravity trajectories, with coma expansion velocities typically 0.5--1~km/s~\cite{bockelee2004borrelly,jockers2011encke} and thermally driven sublimation producing measurable non-gravitational accelerations whose magnitude can rival or exceed any anomalous component~\cite{mumma2011composition}, and (iii) the relevant null is therefore ``gravity plus a stochastic non-gravitational acceleration process,'' within which both passive forcing and potential control action may appear.

Importantly, a controlled craft may operate within the range of natural outgassing magnitudes while a natural body may operate outside of the gravity-only range, making acceleration excess relative to gravity an unreliable discriminator on its own.

The deeper difficulty with Class~G tests is that each apparent anomaly can be absorbed into a richer passive model. An unusual acceleration magnitude may reflect exotic ice composition; an atypical direction, anisotropic sublimation geometry; the absence of a visible coma, low volatile content or small nucleus size. Because thermally driven sublimation produces a reaction force that is predominantly anti-sunward---opposing the incident solar flux, with a transverse component from thermal lag~\cite{mumma2011composition}---there is a directional prior that constrains naive passive explanations and sharpens Class~G tests when the observed deviation lies far from the expected anti-sunward axis. But this sharpening has limits: a sufficiently flexible passive model can always accommodate directional structure as well, so Class~G discrimination ultimately rests on a judgment about model plausibility rather than on a structural impossibility.

\paragraph{State-space baseline with latent non-gravitational acceleration.}
We therefore use a state-space baseline that explicitly includes latent non-gravitational acceleration. Let $\mathbf{a}^{(\mathrm{ng})}_t\in\mathbb{R}^3$ denote the non-gravitational acceleration at time $t$, and define the augmented filter state
\[
\mathbf{X}_t = (\mathbf{x}_t^\top,\;\dot{\mathbf{x}}_t^\top,\;\mathbf{a}^{(\mathrm{ng})}_t{}^\top)^\top \in \mathbb{R}^9,
\]
comprising position, velocity, and latent non-gravitational acceleration. The transition model combines gravitational dynamics for $(\mathbf{x}_t,\dot{\mathbf{x}}_t)$ with a stochastic evolution for $\mathbf{a}^{(\mathrm{ng})}_t$ (e.g.\ mean-reverting, persistent, or regime-switching). Extended or unscented Kalman filtering then yields one-step-ahead position forecasts $\hat{\mathbf{x}}_{t|t-1}$ (the conditional expectation of $\mathbf{x}_t$ given observations through $t-1$) and smoothed estimates of $\mathbf{a}^{(\mathrm{ng})}_t$ without assuming that $\mathbf{x}^{\mathrm{grav}}_t$ is the intended target. The resulting forecast residuals provide the input series for the Class~O tests below.

\subsection{Class O: offset-detection tests}

Class~O tests rest on a qualitatively different footing. However complex the passive forcing model, outgassing, radiation pressure, and stochastic perturbations do not produce closed-loop contraction---that is, they cannot generate error-correcting feedback toward a reference trajectory. This is a structural absence in passive dynamics, not a quantitative threshold, and the tests below are designed to detect its violation.

\paragraph{Test O1: short-horizon forecast consistency and error correction.}
Let $\hat{\mathbf{x}}_{t|t-1}$ be the one-step-ahead forecast from the nonlinear filter. Define forecast residuals
\begin{equation}
\mathbf{r}_t \equiv \mathbf{x}_t-\hat{\mathbf{x}}_{t|t-1}.
\label{eq:residual_def}
\end{equation}
For a passively perturbed object, $\mathbf{r}_t$ inherits disturbance persistence and volatility clustering. For an actively stabilized object, $\mathbf{r}_t$ should exhibit \emph{error correction}: mean reversion and bounded variance. A practical implementation is to fit
\[
\mathbf{r}_t = \Phi \mathbf{r}_{t-1} + \mathbf{u}_t,
\]
and test $H_0:\rho(\Phi)=1$ against $H_1:\rho(\Phi)<1$. This provides an operational proxy for outer-loop contraction (Theorem~\ref{thm:tracking_stability}).

\paragraph{Test O2: disturbance attenuation during independently observed events.}
Space-weather episodes are expected to increase persistence and clustering in the disturbance process~\cite{richardson2018solar,tsurutani2003extreme}. Under passive dynamics, this propagates into forecast residuals, producing detectable ARMA/GARCH structure. Under effective control, the same external disturbance should leave a weaker imprint because it is absorbed by corrective offsets before becoming observable in $\mathbf{r}_t$.

Operationally, partition the sample into ``event'' and ``quiet'' periods using an external index $W_t$ (solar wind pressure proxies, CME arrival indicators, etc.). Compare residual diagnostics across regimes:
\begin{enumerate}
\item change in residual autocorrelation (MA structure),
\item change in conditional variance clustering (ARCH/GARCH tests),
\item correlation between $\|\mathbf{r}_t\|$ and $W_t$.
\end{enumerate}
The controlled-signature prediction is attenuation: event periods increase these diagnostics less than expected under passive propagation.

\paragraph{Test O3: distributional restoration under shocks (Kolmogorov-style).}
If external perturbations have heavy tails (e.g.\ episodic particle bursts), then passive residuals should become less Gaussian during event periods. A controlled craft that cancels part of these shocks should produce \emph{more nearly Gaussian} innovations after conditioning on low-order AR structure.

A practical procedure is:
\begin{enumerate}
\item Fit ARMA dynamics to each component of $\mathbf{r}_t$ (or to $\|\mathbf{r}_t\|$) in rolling windows.
\item Standardize residuals to obtain $\hat{\xi}_t$.
\item Apply a Kolmogorov--Smirnov (or related EDF) test against the quiet-window baseline distribution and compare event versus quiet windows.
\end{enumerate}
The prediction is \emph{relative restoration}: event-window residuals remain closer to baseline than would be expected under measured disturbance intensity. This does not require perfect steering; it requires only that residuals are \emph{more regular than predicted} under comparable shocks.

\paragraph{Test O4: bounded-horizon return to forecasted manifolds.}
For large shocks, perfect cancellation is unlikely. A robust signature is therefore the presence of \emph{return dynamics}: after an identified disturbance, the trajectory returns toward a previously forecasted location or manifold within a bounded horizon.

Let $\hat{\mathbf{x}}^{\mathrm{free}}_{t+\tau|t}$ be the $\tau$-step forecast under the filter when non-gravitational acceleration is propagated passively. Measure the realized deviation at $t+\tau$:
\[
d_{t,\tau} = \|\mathbf{x}_{t+\tau}-\hat{\mathbf{x}}^{\mathrm{free}}_{t+\tau|t}\|.
\]
Under effective control, $d_{t,\tau}$ should exhibit systematic post-shock reduction for some range of $\tau$, consistent with bounded-horizon correction.

\begin{remark}[Authority dependence of recovery signatures]
The engineering tension formalized in Section~\ref{sec:tradeoffs} implies that recovery dynamics are configuration-dependent. A minimal jet system operating near its authority limits may be unable to fully reject a sufficiently strong shock, resulting in a temporary weakening of the contraction conditions in Theorem~\ref{thm:tracking_stability}. In such cases, the craft may require bounded-horizon recovery toward a forecasted manifold, rendering Test~O4 informative. By contrast, a high-authority system with substantial margin can absorb comparable shocks without exceeding its corrective capacity, maintaining $\mathbf{e}_t$ within nominal bounds throughout. For such systems, post-shock return dynamics may be difficult to distinguish from ordinary error correction, rendering Test~O4 less diagnostic.
\end{remark}

\section{Numerical illustration}
\label{sec:numerical_illustration}

We illustrate the signatures through a stylized simulation contrasting passive drift ($\mathbf{o}_t\equiv\mathbf{0}$) with active stabilization ($\mathbf{o}_t\not\equiv\mathbf{0}$) under an identical disturbance realization, isolating signatures attributable to offset control. The reference geometry reflects the coupled debris-avoidance and reconnaissance constraints derived above; the simulation focuses on whether tracking residuals exhibit the error-correcting and disturbance-attenuating dynamics implied by the stability framework.

\subsection{Trajectory design rationale}

Section~\ref{sec:trajectory_geometry} shows that opposite-side crossings of a debris belt of effective thickness $h$ over chord length $d$ require a minimum slope $\theta^*=\arctan(h/d)$ (Proposition~\ref{prop:debris_slope}), yielding a Kuiper-belt scale bound of order $\theta^*_{\mathrm{KB}}\approx 6^\circ$ under conservative thickness and traverse assumptions (Equation~\eqref{eq:kuiper_slope}). The same section further clarifies that the mission value of reducing slope inside the belt depends on node placement and in-plane alignment (Propositions~\ref{prop:range}--\ref{prop:node} and Corollary~\ref{cor:meaningful}), and culminates in the design principle that the propulsion problem is asymmetric: steering requirements are degree-scale and episodic, whereas stability requirements are continuous and mission-critical (Section~\ref{sec:stability_over_steering}).

Accordingly, the numerical illustration conditions on a planned reference trajectory $\mathbf{x}_t^{\mathrm{ref}}$ that is already consistent with belt-safe entry/exit geometry and a chosen inner-system node placement. The simulation then evaluates whether offset control produces the residual signatures predicted by the theory: bounded tracking error, attenuated persistence, and regularized innovation behavior during extreme disturbance regimes.

\subsection{Simulation setup}

Figure~\ref{fig:stability-signatures} presents an eight-panel numerical demonstration based on the following components. The parameter values are chosen to produce clear visual separation between passive and active residual dynamics with the intent to confirm the theoretical predictions of Section~\ref{sec:stability} rather than being calibrated to specific astrophysical environments and astrometric pipelines.

\begin{figure}[h!]
\centering
\includegraphics[width=\textwidth]{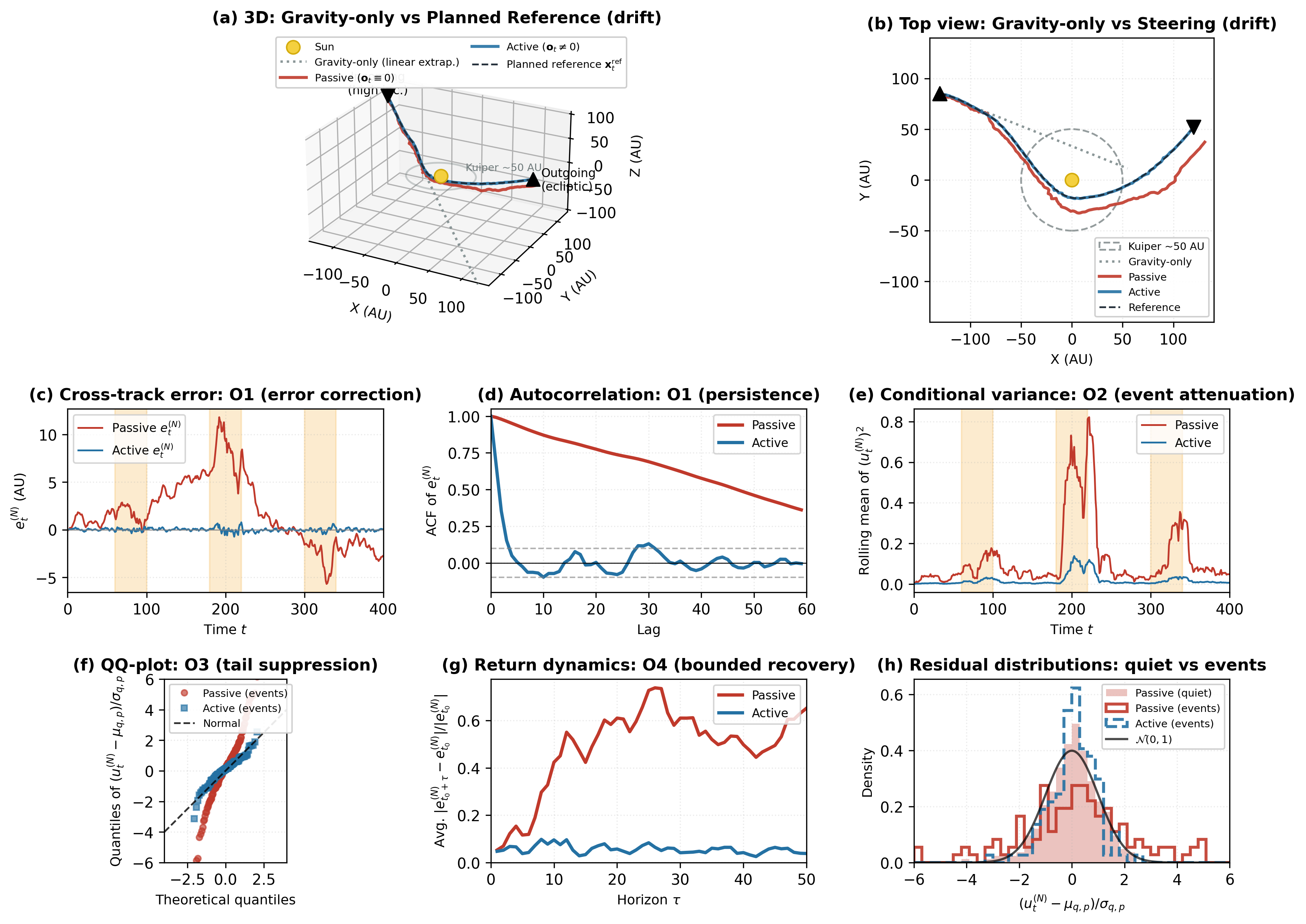}
\caption{\textbf{Stability signatures under identical stochastic forcing.} Trajectory context: (a) 3D view and (b) ecliptic projection for gravity-only baseline (gray dotted), planned reference (black dashed), passive realization (red; $\mathbf{o}_t\equiv\mathbf{0}$), and active realization (blue; offset correction). The Kuiper belt reference at 50~AU and start/end markers are shown for orientation. Signature panels map to Tests O1--O4: (c) cumulated cross-track error $e_t^{(N)}=s_t^{(N)}-o_t^{(N)}$; (d) ACF of $e_t^{(N)}$; (e) rolling mean of $(u_t^{(N)})^2$ (12-step window) with AR(1) coefficient estimated on quiet-to-quiet segments; (f) event-period QQ-plots for standardized innovations (standardized by quiet-period passive $(\mu_{q,p},\sigma_{q,p})$); (g) normalized recovery metric $\bar d_\tau$ from event-peak shock times; (h) event-period KDEs against the quiet baseline. Amber bands denote event windows (elevated disturbance variance). The reference geometry is chosen to be consistent with the debris-avoidance/node-placement setup in Section~\ref{sec:trajectory_geometry}, but the residual signatures arise from control, not from the trajectory shape.}
\label{fig:stability-signatures}
\end{figure}

\paragraph{Reference trajectory.}
The planned reference $\mathbf{x}_t^{\mathrm{ref}}$ (black dashed) is generated via shape-preserving cubic interpolation (PCHIP) through four art-directed waypoints selected to encode the geometry of Section~\ref{sec:trajectory_geometry}:
\begin{enumerate}
    \item[\textup{(i)}] inbound approach at large heliocentric distance, entering above the reference plane with belt-safe margin;
    \item[\textup{(ii)}] interior node placement and perihelion geometry (including a gravity-favorable passage consistent with the planned mission manifold);
    \item[\textup{(iii)}] outbound inner-system traverse with near-plane geometry to support reconnaissance of near-ecliptic targets;
    \item[\textup{(iv)}] outbound exit below the reference plane with a residual inclination of order the Kuiper-belt avoidance scale (cf.\ Equation~\eqref{eq:kuiper_slope}).
\end{enumerate}
The specific waypoint values are illustrative; what matters is that the resulting reference path enforces opposite-side belt crossings with margin and yields a smooth long-horizon guidance component, so that the simulation isolates short-horizon residual dynamics and correction.

\paragraph{Gravity-only baseline.}
The gravity-only proxy (gray dotted) is plotted as a local linear extrapolation from the initial segment of $\mathbf{x}_t^{\mathrm{ref}}$ to illustrate the difficulty inherent in Class~G tests: in practice, neither the gravity-only baseline nor the passive trajectory is directly observed, and a mission planner could set the target trajectory equal to the gravitational one. The visual separation between curves therefore depends on unobserved counterfactuals, underscoring the challenge that gravity-reference comparisons face and motivating the Class~O diagnostics developed above.

\paragraph{Disturbance model.}
Stochastic perturbations follow a GARCH(1,1)--MA(2) process parameterized to exhibit both persistence and volatility clustering:
\begin{itemize}
    \item MA coefficients $\boldsymbol{\theta} = (1.0, 0.6, 0.25)'$, generating short-horizon autocorrelation;
    \item GARCH parameters $\omega = 0.01$, $\alpha = 0.10$, $\beta = 0.84$, producing conditional heteroskedasticity.
\end{itemize}
During event windows (amber bands in panels~c and~e of Figure~\ref{fig:stability-signatures}), the conditional variance multiplier increases eight-fold, the ARCH coefficient $\alpha$ rises by $0.06$, and a small bias term is added to induce asymmetric impulses. This stylized regime-shift construction is meant to mimic the clustered-energy structure. Critically, the \emph{same disturbance realization} is applied to both passive and active trajectories, ensuring that differences arise solely from the presence or absence of offset control.

\paragraph{Control model.}
The active trajectory implements first-order offset correction in the trajectory-aligned basis $\{T,R,N\}$:
\begin{equation}
    o_t^{(j)} \;\leftarrow\; o_t^{(j)} + \gamma_j \bigl(s_t^{(j)} - o_t^{(j)}\bigr),
    \quad j \in \{T, R, N\},
    \label{eq:control_update}
\end{equation}
with gain parameters $\gamma_N = 0.60$, $\gamma_R = 0.45$, $\gamma_T = 0.16$, reflecting prioritized cross-track authority. This update corresponds to exponential smoothing of the disturbance accumulation toward zero tracking error, consistent with the offset interpretation in Section~\ref{sec:problem} and the closed-loop residual logic developed in Section~\ref{sec:tracking}.

\subsection{Class O test results}

Figure~\ref{fig:stability-signatures} summarizes the four Class~O diagnostics (O1--O4) on a single simulated realization, contrasting passive drift ($\mathbf{o}_t\equiv\mathbf{0}$) with active offset correction. The intent is illustrative: the disturbance process is held fixed across cases, so differences reflect only the presence or absence of control.

\paragraph{Test O1: Error correction (panels c, d).}
Panel~(c) contrasts accumulation versus boundedness: passive errors grow in a near-random-walk fashion, whereas active errors remain bounded and fluctuate about the reference. Panel~(d) shows the corresponding persistence reduction: the passive ACF decays slowly, while the active ACF drops rapidly, consistent with the error-correcting dynamics implied by Theorem~\ref{thm:tracking_stability}.

\paragraph{Test O2: Disturbance attenuation (panel e).}
Panel~(e) tracks innovation energy through time. During event windows, passive $(u_t^{(N)})^2$ exhibits pronounced spikes, while the active case is attenuated, consistent with disturbance rejection in the residual channel. To limit event-induced bias, the AR(1) coefficient is estimated using quiet-to-quiet transitions as described in Section~\ref{sec:tests}.

\paragraph{Test O3: Distributional restoration (panels f, h).}
Panels~(f) and~(h) assess whether event-period standardized innovations revert toward the quiet baseline. Relative to the passive case, the active case shows reduced tail heaviness in the event QQ-plot and a distribution (KDE) that is compressed toward the quiet-period reference.

\paragraph{Test O4: Bounded recovery (panel g).}
Panel~(g) reports the average post-shock return profile,
\[
\bar{d}_\tau \;=\; \frac{1}{|\mathcal{T}|}\sum_{t_0 \in \mathcal{T}}
\frac{|e_{t_0+\tau}^{(N)} - e_{t_0}^{(N)}|}{|e_{t_0}^{(N)}|},
\]
with $\mathcal{T}$ indexing event-peak shock times. In this realization, the passive case exhibits sustained drift away from pre-shock error levels, while the active case shows bounded recovery toward the reference manifold.

Overall, the panels align with the qualitative predictions of O1--O4: active control induces faster error correction, attenuates event-conditioned variance, suppresses heavy tails in standardized residuals, and yields bounded post-shock recovery. This numerical example is not meant to establish empirical detectability under realistic observing conditions; practical inference will depend on disturbance magnitudes, cadence, and astrometric signal-to-noise (Section~\ref{sec:conclusion}).

\section{Discussion and Conclusion}
\label{sec:conclusion}

The main contribution of this work is to make the link between control-theoretic requirements and observable residual structure explicit. The proposed tests are best viewed as \emph{probabilistic discriminants} designed to reduce reliance on assumptions about long-horizon intent by focusing on residual properties implied by active stabilization. Bridging theory to practice will require end-to-end simulation studies with realistic astrometric noise and estimation pipelines, archival analyses of well-observed comets/asteroids to establish passive null statistics, sensitivity mapping across object and observing parameters, and joint observation--control models quantifying how filtering reshapes residual signatures. The tests proposed here should be read as theoretically grounded hypotheses that may inform the development of such operational tools.

\subsection{Residual interpretation}

A useful organizing view is to separate gravity-dominated motion from any feasible guidance, and treat the observable residual as the balance between disturbances and corrective action. For a passive body, deviations from gravity primarily reflect uncontrolled non-gravitational forcing. For an actively stabilized craft, the residual is the remainder after feasible compensation and is therefore expected (statistically) to exhibit stronger short-horizon error correction and weaker persistence \emph{and} variance clustering under comparable disturbance environments. Although minimal jet configurations clarify the irreducible control geometry, the stability and signature logic is configuration-agnostic: increasing $m$ expands authority margins and distributes load, but does not change the underlying multi-level requirements for bounded residual dynamics.

\subsection{Observational feasibility}

A substantial gap remains between the mathematical framework and what can be implemented with current observational systems. Residual-based inference requires sustained, well-sampled astrometry over multiple rotation cycles, ideally at cadence comparable to or shorter than the rotation period. For ISOs with rotation periods of 5--20~hours~\cite{Kokotanekova2017Rotation}---and potentially shorter precession periods under non-principal-axis rotation~\cite{scarmato2026wobble}---and observing windows limited by hyperbolic passage~\cite{jewitt2022interstellar}, this is challenging, though coordinated multi-site campaigns and space-based assets can extend coverage~\cite{engelhardt2017population,eldadi2025}. Several constraints are likely binding. Astrometric residuals are outputs of filtering pipelines that can attenuate high-frequency structure---often where corrections are most visible---potentially confounding control signatures with estimation artifacts. Detectability depends on whether environmental forcing produces astrometrically resolvable residual structure and whether any attenuation under control is large relative to measurement noise; the signal-to-noise required to reliably identify time-series structure (e.g., short-memory vs.\ clustering behavior) in typical campaigns is not yet well characterized. Small bodies have larger area-to-mass ratios and stronger non-gravitational signatures but are fainter and harder to track at high cadence, while larger bodies are easier to track but may be intrinsically less perturbed. Phase-dependent authority---a central Floquet prediction---requires reliable rotation-state estimates from photometry, and those uncertainties propagate directly into phase-resolved tests.

No single signature is decisive. Natural processes (e.g., outgassing variability, radiation pressure, Yarkovsky-type effects) can produce substantial deviations and complex residual structure, and observational cadence and noise can mimic or erase the patterns targeted. The most defensible inference comes from \emph{bundles} of mutually reinforcing evidence interpreted against a realistic passive baseline. The strongest evidence would arise from co-occurring signatures across independent measurement channels, such as rotation-locked photometric modulation aligning with epochs of reduced astrometric residual variance; thermal emission consistent with non-solar power operation at large heliocentric distances~\cite{gibson2018krusty} coinciding with trajectory correction; or spectroscopic indicators of selective volatile depletion~\cite{mumma2011composition,altwegg2019rosina} correlating with inferred propellant expenditure. The evidentiary weight comes from joint probability under physically grounded models~\cite{eldadi2025,wright2018technosignatures}, rather than from any single feature.

\subsection{Concluding remarks}

Interstellar objects motivate an unusual control problem: trajectories are geometrically constrained at the planning stage, yet the dominant operational burden is long-horizon stability under non-gravitational forcing. The analysis here makes that asymmetry explicit. For ISO-class transits, debris-avoidance geometry and encounter phasing couple into a narrow feasible set, so once a profile is selected the mission risk is governed less by episodic steering authority than by the ability to maintain tight tracking tolerances over long durations.

Within a gravity-referenced formulation, we modeled the observable tracking residual as the net balance between stochastic disturbances and corrective offsets, and derived a layered set of stability requirements that must hold simultaneously for sustained navigation. The key implication is structural rather than parametric: when stabilization is effective, the residual dynamics should exhibit contraction and regularization relative to uncontrolled natural dynamics under comparable forcing, even when the underlying environment is persistent and intermittently extreme. This provides a principled alternative to gravity-only comparisons, which can register non-gravitational acceleration but are not, by themselves, diagnostic when natural outgassing and radiation-pressure variability can generate comparable departures.

On this basis, we outlined residual-based diagnostics that are compatible with empirical testing: short-horizon error correction, attenuation of event-conditioned persistence and variance clustering during independently identified disturbance episodes, more regular standardized innovations, and bounded recovery after shocks. A stylized numerical illustration shows how these signatures can arise from offset control under a shared disturbance realization, while underscoring that practical detectability will depend on cadence, astrometric precision, and the fidelity of end-to-end state-estimation pipelines.

The temporal contraction conditions derived here presuppose that the underlying geometric steering authority---thruster placement ensuring full directional coverage and appropriate allocation of control bandwidths---is already in place. Such spatial feedback stability is a necessary prerequisite for the tracking problem studied here, but it is not sufficient on its own: without the temporal contraction conditions, geometric authority alone cannot prevent stochastic disturbances from accumulating into long-run trajectory divergence.

Because discovery opportunities are rare and observing windows short, the framework is intended to support structured anomaly screening and to help prioritize coordinated follow-up when future interstellar visitors provide limited but high-value measurement opportunities~\cite{eldadi2025}. More broadly, it links engineering feasibility to observational consequences through stability theory, and thereby clarifies what would constitute discriminating evidence in the next ISO encounter.

\section*{Acknowledgments}

The findings, interpretations, and conclusions expressed herein are entirely those of the author and do not necessarily represent the views of the International Bank for Reconstruction and Development/World Bank, its Board of Executive Directors, or the governments they represent. No World Bank resources were used to conduct this research. This research received no specific funding. The author thanks Prof.\ Avi Loeb and acknowledges Dr.\ Frank Laukien for insights that informed this work.

\subsection*{Author Contributions}
B.~P.~J.~Andr\'ee conceived the study, developed the theoretical framework, designed and implemented the numerical simulations, and wrote the manuscript.

\subsection*{Funding}
This research received no specific funding.

\subsection*{Conflicts of Interest}
The author declares that there is no conflict of interest regarding the publication of this article.

\section*{Supplementary Materials}

\noindent Accompanying work~\cite{andree2026thruster} develops the minimal actuation architecture and its operating regimes.

\subsection*{Code Availability}
Reproducible code for the numerical illustrations is available as part of the supplementary materials. No experimental data were generated in this study.
Code and data to reproduce results and figures are archived on Zenodo at \url{https://doi.org/10.5281/zenodo.18987248}.

Code S1. Reproducible simulation code for the numerical illustration (Section~\ref{sec:numerical_illustration}).

\printbibliography

\end{document}